\newcommand\norm[1]{\left\lVert#1\right\rVert}
\newcommand{\Lapl}{\mathbf{\mathop{\mathcal{L}}}}
\newcommand{\Trans}[1]{{#1}^{\top}}
\newcommand{\Mat}[1]{\mathbf{#1}}
\newcommand{\Space}[1]{\mathbb{#1}}
\newcommand{\Set}[1]{\mathcal{#1}}
\newcommand\eqc{\mathrel{\overset{\makebox[0pt]{\mbox{\normalfont\tiny\sffamily c}}}{=}}}
\newcommand{\ie}{\emph{i.e., }}
\newcommand{\eg}{\emph{e.g., }}
\newcommand{\st}{\emph{s.t. }}
\newcommand{\wrt}{\emph{w.r.t.}~}
\newcommand{\cf}{\emph{cf. }}
\newcommand{\wx}[1]{{\color{black}{#1}}}
\newcommand{\wxx}[1]{{\color{black}{#1}}}
\newcommand{\za}[1]{{\color{black}{#1}}}
\definecolor{-}{rgb}{0.25,0.41,0.88}
\definecolor{+}{rgb}{0.70,0.13,0.13}
\title{Incorporating Bias-aware Margins into Contrastive Loss for Collaborative Filtering}
\author{
  An Zhang$^{\dag\ddag}$~~Wenchang Ma$^{\ddag}$~~Xiang Wang$^{\S}$\thanks{Xiang Wang is the corresponding author.}~~~Tat-Seng Chua$^{\dag\ddag}$\\
  $^{\dag}$Sea-NExT Joint Lab\\
  $^{\ddag}$National University of Singapore\\
  $^{\S}$University of Science and Technology of China\\
  \texttt{anzhang@u.nus.edu},~~\texttt{e0724290@u.nus.edu},~~\texttt{xiangwang1223@gmail.com}\\\texttt{dcscts@nus.edu.sg}
}
\begin{document}

\maketitle

\begin{abstract}
Collaborative filtering (CF) models easily suffer from popularity bias, which makes recommendation deviate from users' actual preferences.
However, most current debiasing strategies are prone to playing a trade-off game between head and tail performance, thus inevitably degrading the overall recommendation accuracy.
\wxx{To reduce the negative impact of popularity bias on CF models, we incorporate \underline{B}ias-aware margins into \underline{C}ontrastive loss and propose a simple yet effective \textbf{BC Loss}, where the margin tailors quantitatively to the bias degree of each user-item interaction.}
We investigate the geometric interpretation of BC loss, then further visualize and theoretically prove that it simultaneously learns better head and tail representations by encouraging the compactness of similar users/items and enlarging the dispersion of dissimilar users/items.
Over eight benchmark datasets, we use BC loss to optimize two high-performing CF models.
On various evaluation settings (\ie imbalanced/balanced, temporal split, fully-observed unbiased, tail/head test evaluations), BC loss outperforms the state-of-the-art debiasing and non-debiasing methods with remarkable improvements.
Considering the theoretical guarantee and empirical success of BC loss, we advocate using it not just as a debiasing strategy, but also as a standard loss in recommender models.
\wx{Codes are available at \url{https://github.com/anzhang314/BC-Loss}.}
\end{abstract}

\section{Introduction}

At the core of leading collaborative filtering (CF) models is \wx{the learning of} high-quality representations of users and items from historical interactions.
However, most CF models easily suffer from the popularity bias issue in the interaction data \cite{unfairness,Crowd,MostPop,Beyond_Parity}.
Specifically, the training data distribution is typically long-tailed, \eg a few head items occupy most of the interactions, whereas the majority of tail items are unpopular and receive little attention.
The CF models built upon the imbalanced data are prone to learn the popularity bias and even amplify it by over-recommending head items and under-recommending tail items.
As a result, the popularity bias causes the biased representations with poor generalization ability, making recommendations deviate from users' actual preferences.

Motivated by concerns of popularity bias, studies on debiasing have been conducted to lift the tail performance.
Unfortunately, most prevalent debiasing strategies focus on the trade-off between head and tail evaluations (see Table \ref{tab:subgroups}), including post-processing re-ranking \cite{Calibration,RankALS,rescale,FPC,PC_Reg}, balanced training loss \cite{ESAM,ALS+Reg,Regularized_Optimization,PC_Reg}, sample re-weighting \cite{ips,IPS-C,IPS-CN,Propensity_SVM-Rank,YangCXWBE18,UBPR}, and head bias removal by causal inference \cite{CausE,PDA,DecRS,MACR}.
Worse still, many of them hold some assumptions that are infeasible in practice, such as the balanced test distribution is known in advance to guide the hyperparameters' adjustment \cite{DICE, MACR}, or a small unbiased data is present to train the unbiased model \cite{KDCRec,CausE}.
Consequently, \wx{they pursue improvements on tail items but exacerbate the performance sacrifice of head items, leading to a severe overall performance drop.}
The trade-off between \wx{the} head and tail evaluations \wx{results in} suboptimal representations, which derails the generalization ability.

\begin{figure*}[t]
	\centering
	\subcaptionbox{BPR loss (head)\label{fig:head_bpr}}{
	    \vspace{-6pt}
		\includegraphics[width=0.23\linewidth]{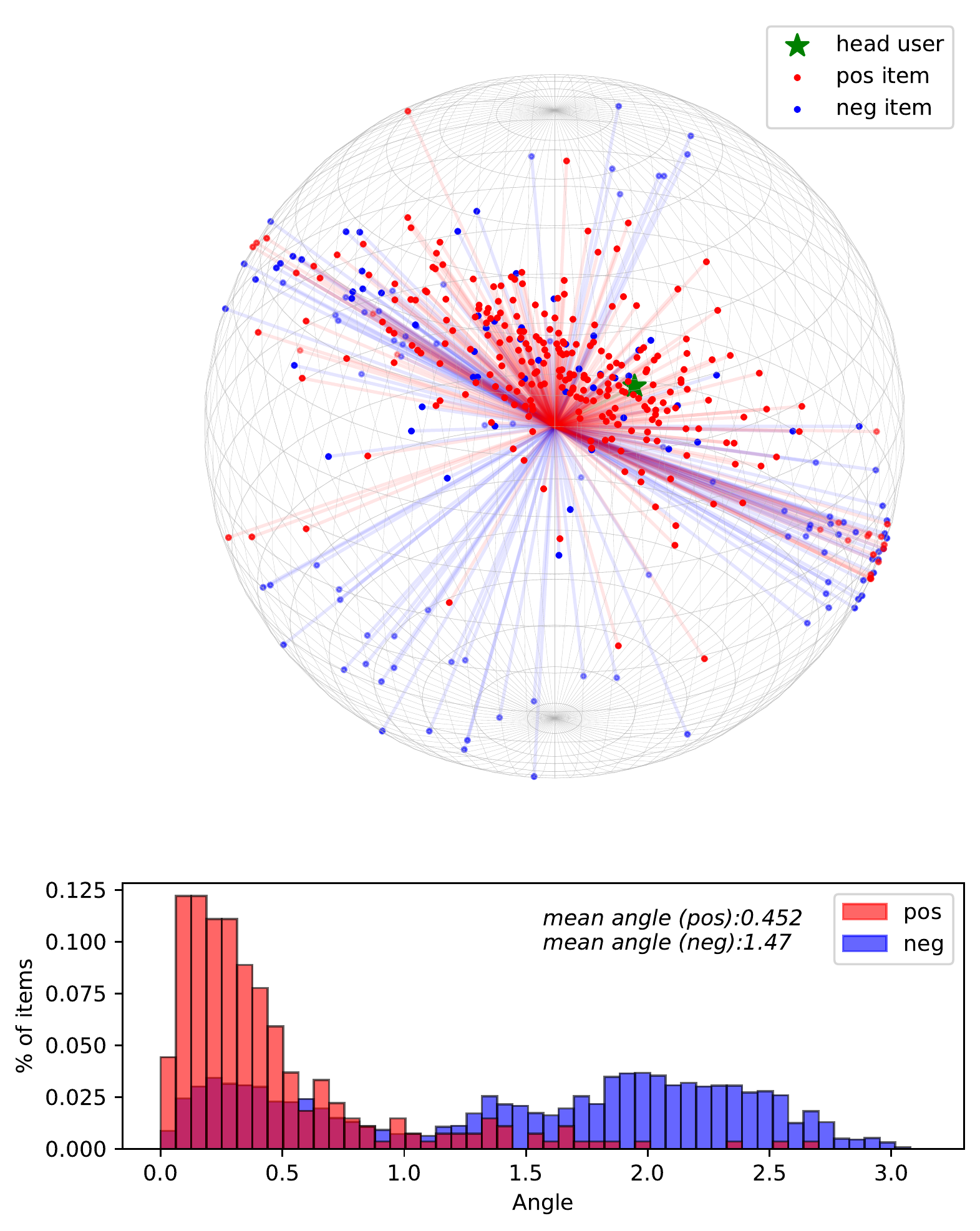}}
	\subcaptionbox{Softmax loss (head)\label{fig:head_softmax}}{
	    \vspace{-6pt}
		\includegraphics[width=0.23\linewidth]{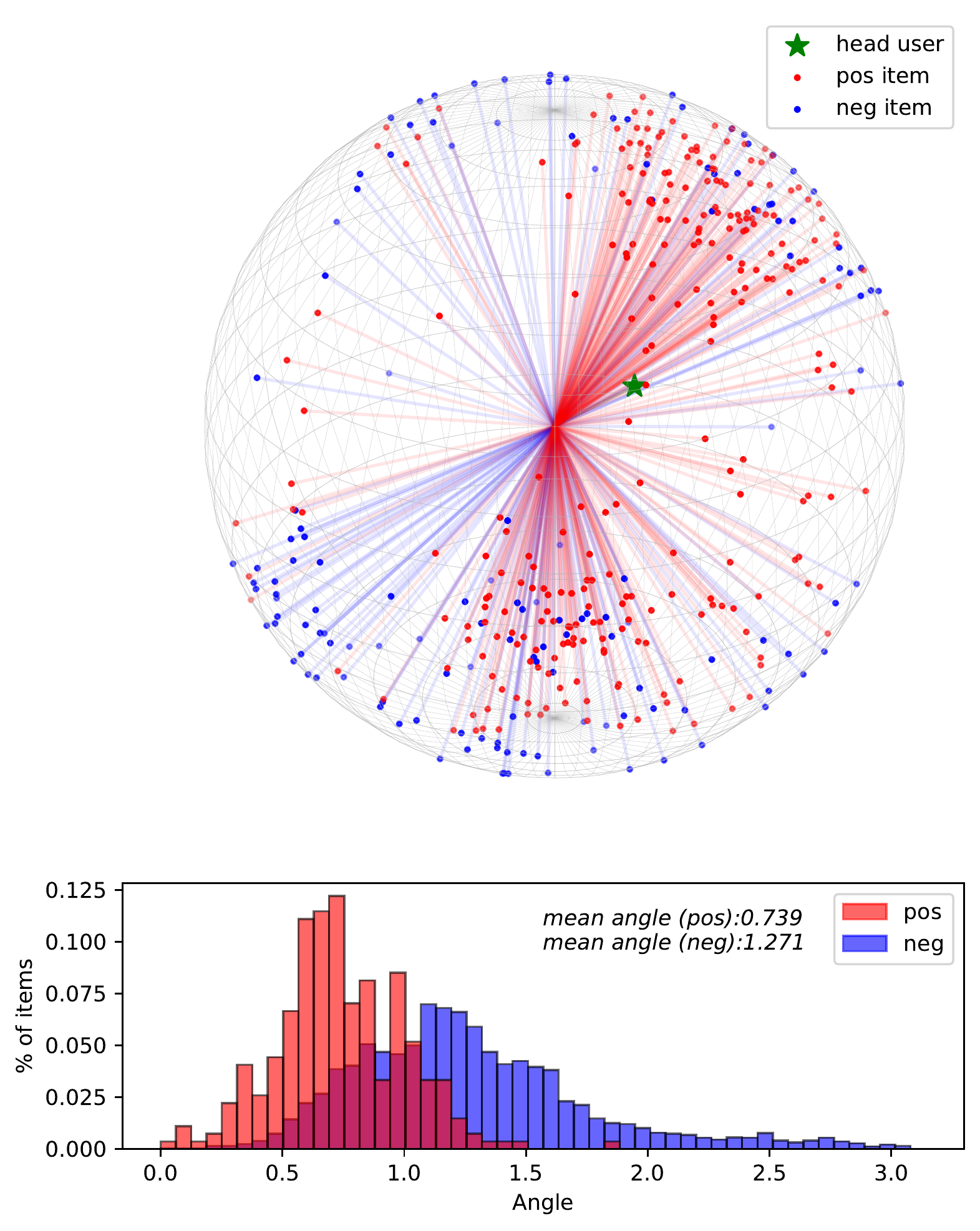}}
    \subcaptionbox{IPS-CN \cite{IPS-CN} (head) \label{fig:head_ips}}{
	    \vspace{-6pt}
		\includegraphics[width=0.23\linewidth]{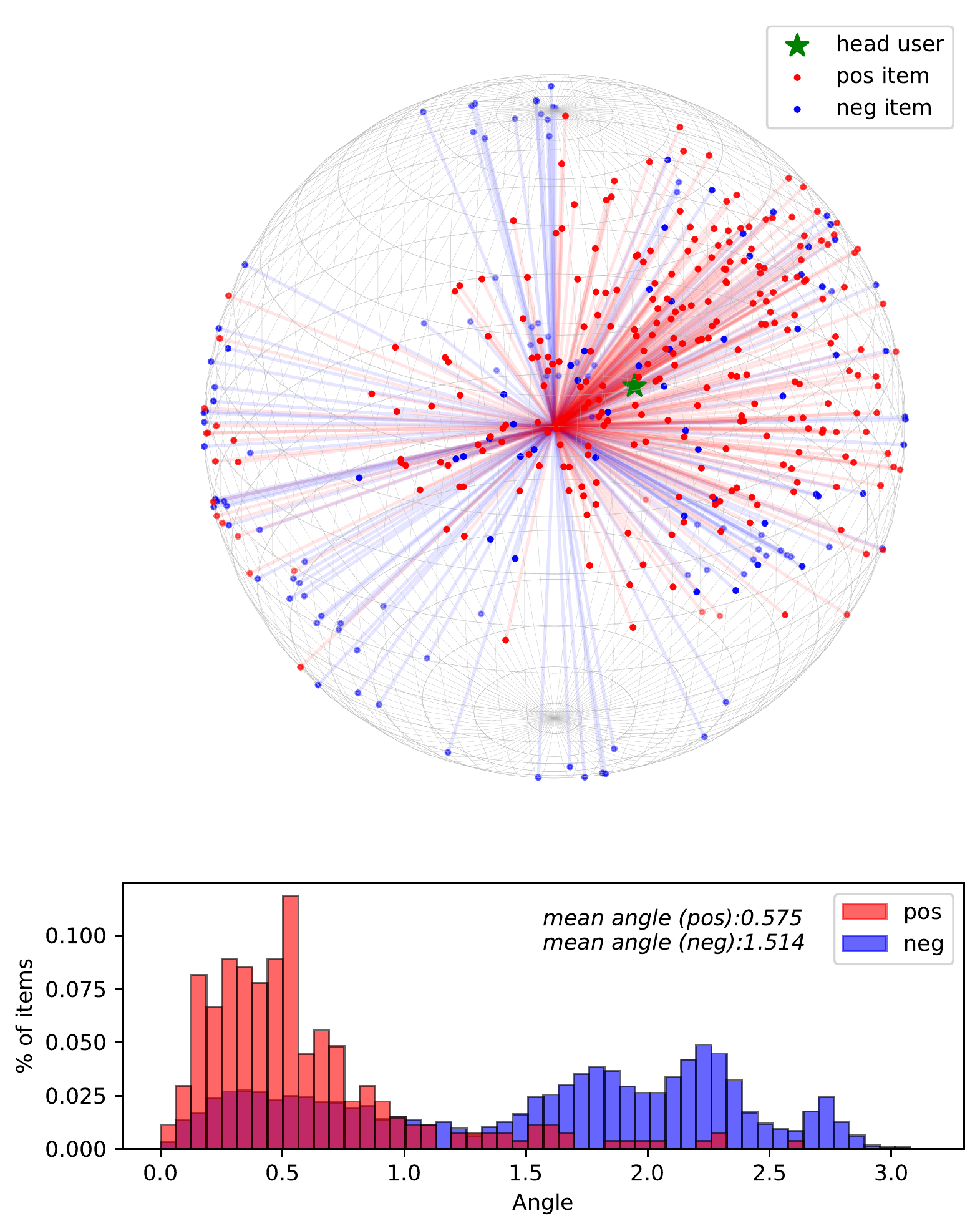}}
	\subcaptionbox{BC loss (head)\label{fig:head_bc}}{
	    \vspace{-6pt}
		\includegraphics[width=0.23\linewidth]{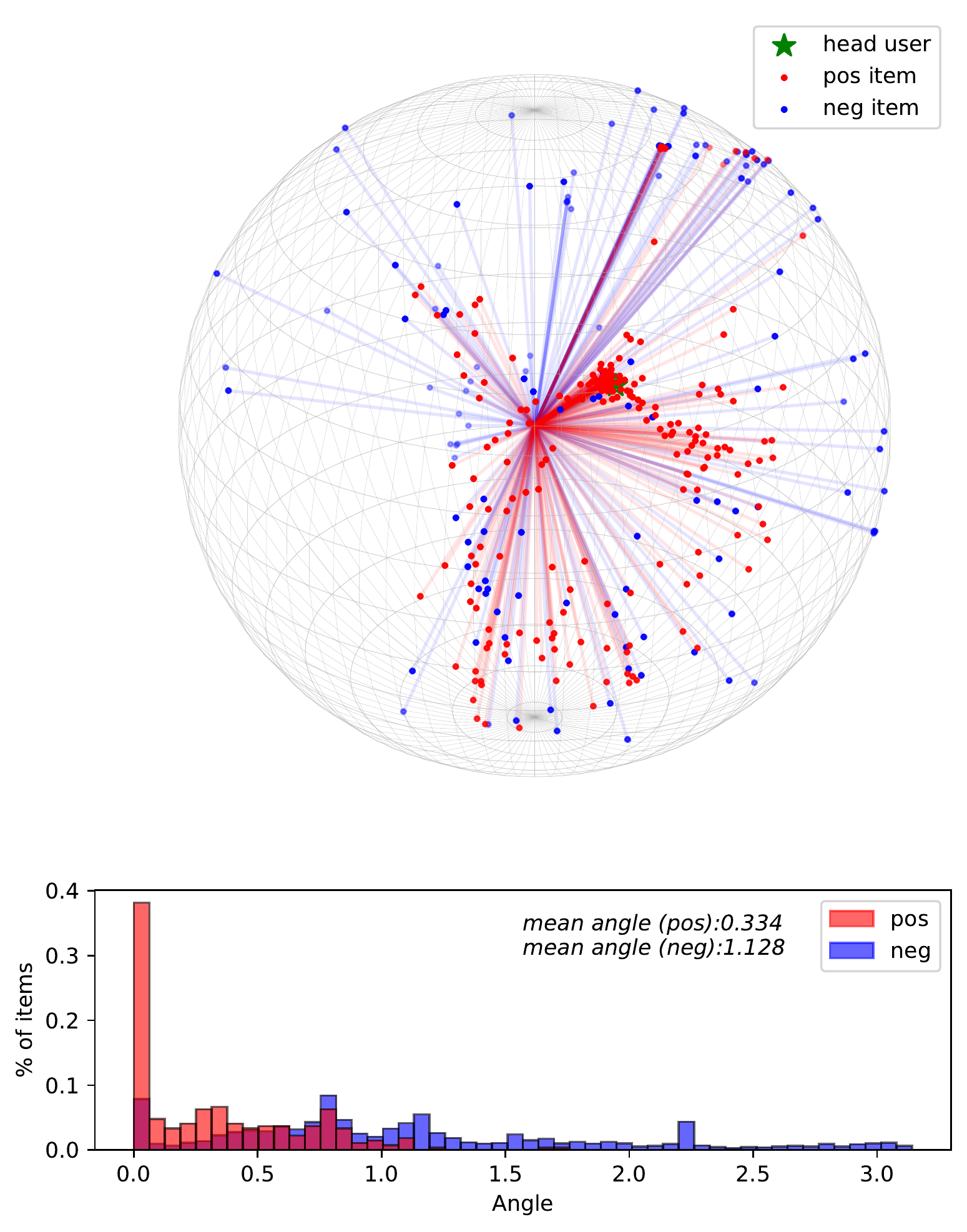}}

	\subcaptionbox{BPR loss (tail)\label{fig:tail_bpr}}{
	    \vspace{-6pt}
		\includegraphics[width=0.23\linewidth]{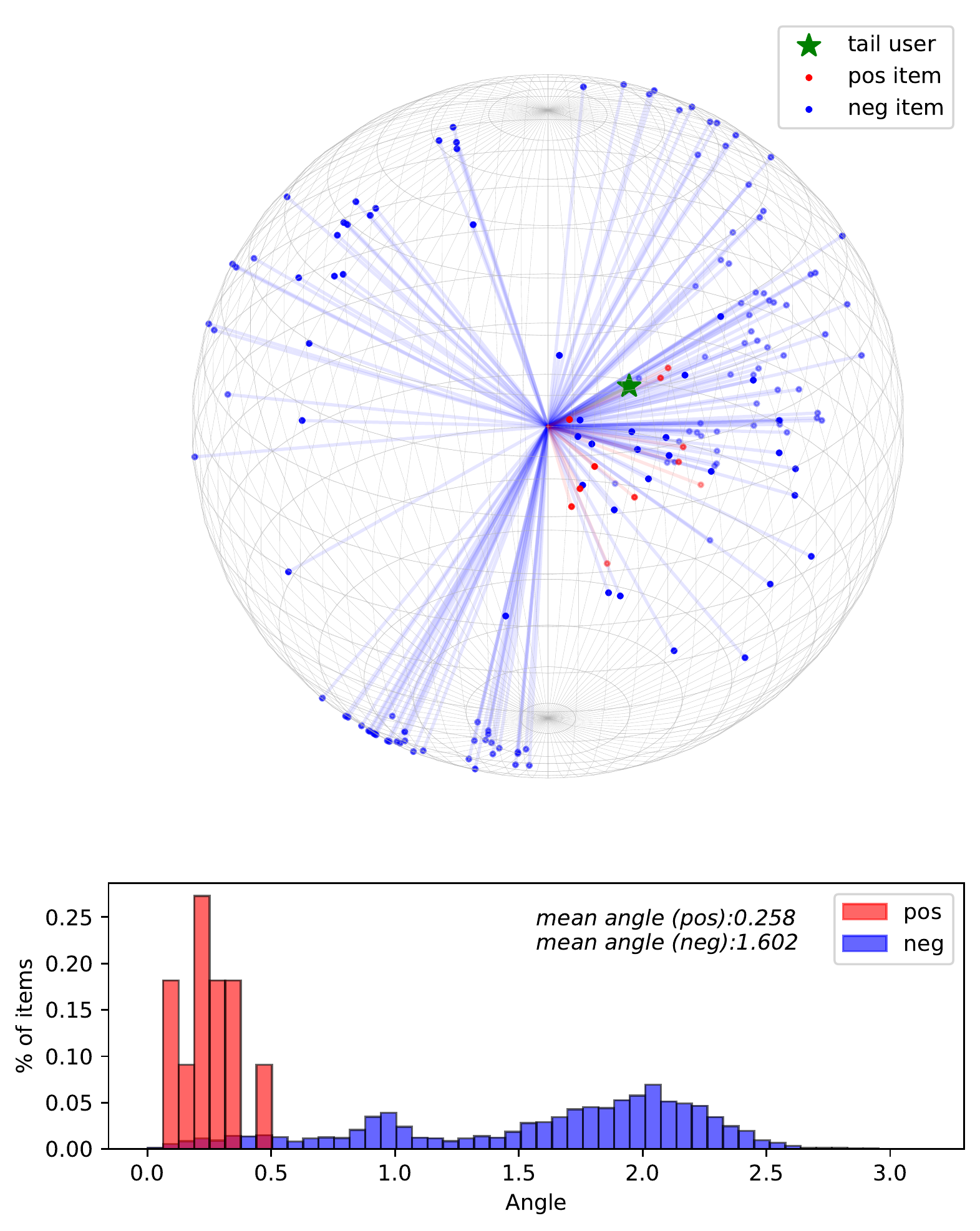}}
	\subcaptionbox{Softmax loss (tail)\label{fig:tail_softmax}}{
	    \vspace{-6pt}
		\includegraphics[width=0.23\linewidth]{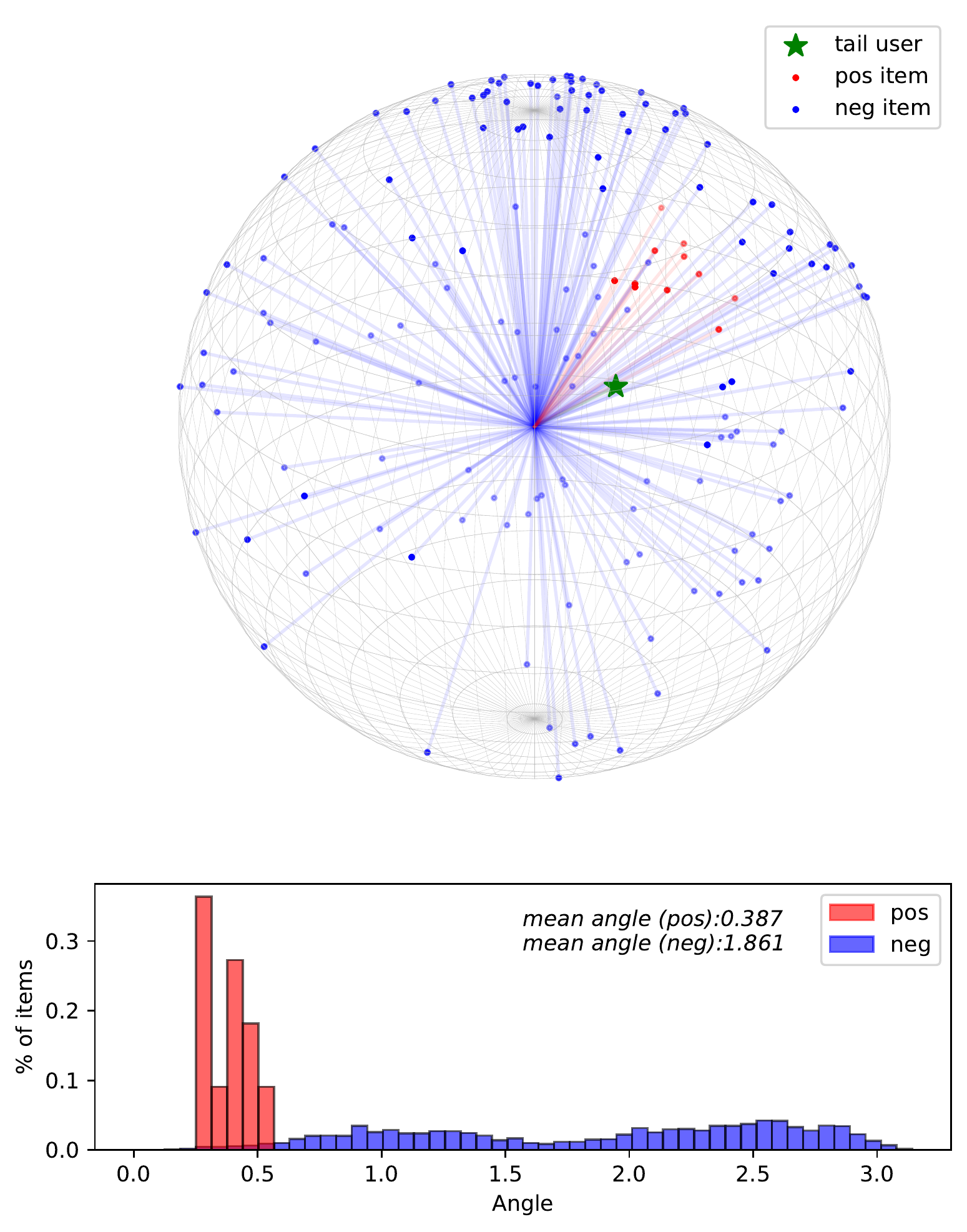}}
    \subcaptionbox{IPS-CN \cite{IPS-CN} (tail)\label{fig:tail_ips}}{
	    \vspace{-6pt}
		\includegraphics[width=0.23\linewidth]{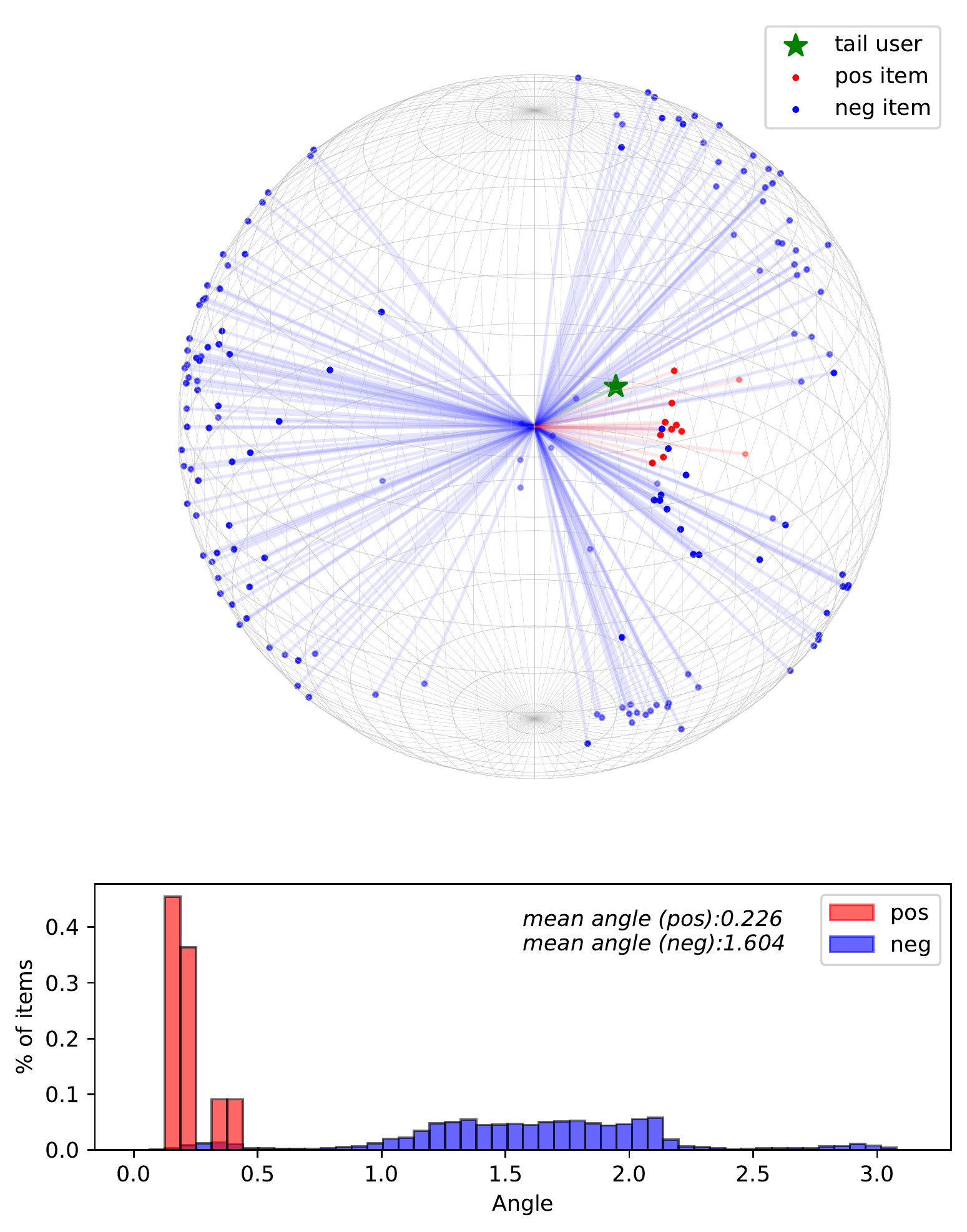}}
	\subcaptionbox{BC loss (tail)\label{fig:tail_bc}}{
	    \vspace{-6pt}
		\includegraphics[width=0.23\linewidth]{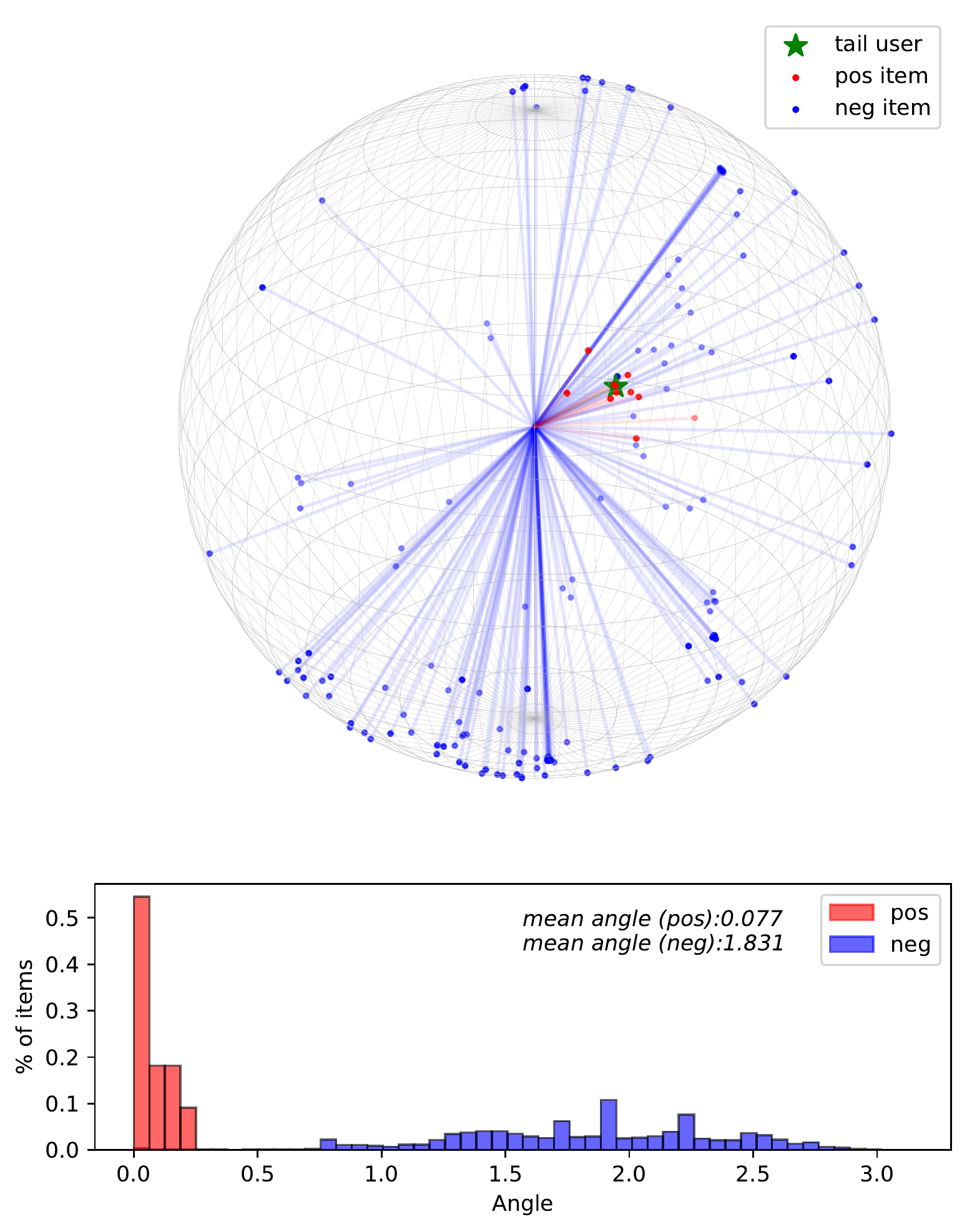}}
	\caption{Visualizations of item representations learned by LightGCN \cite{LightGCN} on Yelp2018~\cite{LightGCN}, where subfigures (a-d)/(e-h) depict the identical head/tail user as a green star, while the red and blue points denote positive and negative items, respectively. In each subfigure, the first row presents the 3D item representations projected on the unit sphere, while the second row shows the angle distribution of items \wrt the specific user and the statistics of mean angles.
    Compared to other losses, BC loss learns better head representations (\cf with the smallest mean positive angle, the vast majority of positive items fall into the group closest to the user) and tail representations (\cf a clear margin exists between positive and negative items for the tail user). \za{BC loss learns a more reasonable representation distribution that is locally clustered and globally separated.} See more details in Appendix \ref{sec:toy_example}.}
	\label{fig:toy-example}
	\vspace{-10pt}
\end{figure*}



\wxx{
In this paper, we conjecture that an ideal debiasing strategy should learn high-quality head and tail representations with powerful discrimination and generalization abilities, rather than playing a trade-off game between the head and tail performance.
Here we follow the prior studies \cite{IPS-CN,DICE,ips,IPS-C} to focus on one key ingredient in representation learning: the loss function.
Figure \ref{fig:toy-example} depicts the item representations, which is optimized via two non-debiasing losses (BPR \cite{BPR} and Softmax \cite{SampledSoftmaxLoss}) and one debiasing loss (IPS-CN \cite{IPS-CN}).
Wherein, representation discrimination is reflected in how well the positive items of a user are apart from the negatives.
Our insights are:
(1) For a user, the non-debiasing losses are inadequate to discriminate his/her positive and negative items well, since their representations are largely overlapped as Figures \ref{fig:head_bpr} and \ref{fig:head_softmax} show;
(2) Although IPS-CN achieves better discrimination power in the tail group than BPR (\cf positive items get smaller angles to the ego user in Figure \ref{fig:tail_ips}, as compared to Figure \ref{fig:tail_bpr}), it gets worse discrimination ability in the head (\cf positive items hold larger angles to the ego user in Figure \ref{fig:head_ips}, as compared to Figure \ref{fig:head_bpr}).
}

Towards this end, we incorporate \underline{B}ias-aware margins into \underline{C}ontrastive Loss and devise a simple yet effective \textbf{BC Loss} to guide the head and tail representation learning of CF models.
Specifically, we first \wx{employ} a bias degree extractor to \wx{quantify the influence of interaction-wise popularity} --- that is, how well an interaction is predicted, when only popularity information of the target user and item is used.
Interactions involving inactive users and unpopular items often \wx{align with} lower bias degrees, indicating that popularity fails to reflect user preference faithfully.
In contrast, \wx{interactions with active users and popular items are spurred by the popularity information, thus easily inclining to high bias degrees.}
We then move on to \wx{train} the CF model by converting the bias degrees into the angular margins between user and item representations.
If the bias degree is low, we impose a larger margin to strongly squeeze the tightness of representations.
In contrast, if the bias degree is large, we \wx{exert} a small or vanishing margin to reduce the influences of biased representations.
\wx{Through this way, for each ego user's representation, BC quantitatively controls its bias-aware margins with item representations --- adaptively intensifying the representation similarity among positive items, while diluting that among negative items.}
Benefiting from stringent and discriminative representations, BC loss significantly improves both head and tail performance.







Furthermore, BC loss has \wx{three} desirable advantages.
First, it has a clear geometric interpretation, as \wx{illustrated} in Figure \ref{fig:geometric_interpretation}.
Second, it brings forth a simple but effective mechanism of hard example mining (See Appendix \ref{sec:hard_example}).
Third, we theoretically reveal that BC loss tends to learn a low-entropy cluster for positive pairs (\eg \wx{compactness of matched users and items}) and a high-entropy space for negative pairs (\eg \wx{dispersion of unmatched users and items})  (See Theorem \ref{theorem}).
Considering the theoretical guarantee and empirical effectiveness, \wx{we argue that} BC loss is not only promising to alleviate popularity bias, but also suitable as \wx{a} standard learning strategy in CF.


\section{Preliminary of Collaborative Filtering (CF)}

\textbf{Task Formulation.} \label{sec:task-formulation}
Personalized recommendation is retrieving a subset of items from a large catalog to match user preference.
Here we consider a \wx{typical} scenario, collaborative filtering (CF) with implicit feedback \cite{softmax}, which can be framed as a top-$N$ recommendation problem.
Let $\Set{O}^{+}=\{(u,i)|y_{ui}=1\}$ be the historical interactions between users $\Set{U}$ and items $\Set{I}$, where $y_{ui}=1$ indicates that user $u\in\Set{U}$ has adopted item $i\in\Set{I}$ before.
Our goal is to optimize a CF model $\hat{y}: \Space{U}\times\Space{I}\rightarrow\Space{R}$ that latches on user preference towards items.


\textbf{Modeling Scheme.}
Scrutinizing leading CF models \cite{BPR,LightGCN,SVD++,MultVAE}, we systematize the common paradigm as a combination of three modules: user encoder $\psi(\cdot)$, item encoder $\phi(\cdot)$, and similarity function $s(\cdot)$.
Formally, we depict one CF model as $\hat{y}(u,i) = s(\psi(u), \phi(i))$,
where $\psi:\Space{U}\rightarrow\Space{R}^d$ and $\phi:\Space{I}\rightarrow\Space{R}^d$ encode the identity (ID) information of user $u$ and item $i$ into $d$-dimensional representations, respectively;
$s:\Space{R}^d \times\Space{R}^d\rightarrow\Space{R}$ measures the similarity between user and item representations.
In literature, there are various choices of encoders and similarity functions:
\begin{itemize}[leftmargin=*]
    \item Common encoders roughly fall into three groups: ID-based (\eg MF \cite{BPR,SVD++}, NMF \cite{NMF}, CMN \cite{cmn}), history-based (\eg SVD++ \cite{SVD++}, FISM \cite{FISM}, MultVAE \cite{MultVAE}), and graph-based (\eg GCMC \cite{GCMC}, PinSage \cite{PinSage}, LightGCN \cite{LightGCN}) fashions.
    Here we select two high-performing encoders, MF and LightGCN, as the backbone models being optimized.

    \item The widely-used similarity functions include dot product \cite{BPR}, cosine similarity \cite{RendleKZA20}, and neural networks \cite{NMF}.
    As suggested in the recent study \cite{RendleKZA20}, cosine similarity is a simple yet effective and efficient similarity function in CF models, having achieved strong performance.
    For better interpretation, we take a geometric view and denote it by:
    \begin{gather}\label{eq:cosine-similarity}
        s(\psi(u), \phi(i))=\frac{\Trans{\psi(u)}\phi(i)}{\norm{\psi(u)} \cdot \norm{\phi(i)}}\doteq \cos{(\hat{\theta}_{ui})},
    \end{gather}
    in which $\hat{\theta}_{ui}$ is the angle between the user representation $\psi(u)$ and item representation $\phi(i)$.
\end{itemize}

\textbf{Learning Strategy.}
To optimize the model parameters, CF models mostly frame the top-$N$ recommendation problem into a supervised learning task, and resort to one of three classical learning strategies:
pointwise loss (\eg binary cross-entropy \cite{CrossEntropy}, mean square error \cite{SVD++}),
pairwise loss (\eg BPR \cite{BPR}, WARP \cite{WARP}), and softmax loss \cite{softmax}.
Among them, pointwise and pairwise losses are long-standing and widely-adopted objective functions in CF.
However, extensive studies \cite{PC_Reg,unfairness,tang2020investigating} have analytically and empirically confirmed that using pointwise or pairwise loss is prune to propagate more information towards the head user-item pairs, which 
amplifies popularity bias.

Softmax loss is much less explored in CF than its application in other domains like CV \cite{ArcFace,SphereFace}.
Recent studies \cite{RendleKZA20,CLRec,S3-Rec,DBLP:conf/www/Lian0C20,DBLP:conf/recsys/YiYHCHKZWC19} find that it inherently conducts \wx{hard example mining} over multiple negatives and aligns well with the ranking metric, thus \wx{attracting a surge of interest} in recommendation.
Hence, we cast the minimization of softmax loss \cite{SampledSoftmaxLoss} as the representative learning strategy:
\begin{gather}
    \Lapl_0 = -\sum_{(u,i)\in\Set{O}^{+}}\log\frac{\exp{(\cos(\hat{\theta}_{ui})/\tau)}}{\exp{(\cos(\hat{\theta}_{ui})/\tau)}+\sum_{j\in\Set{N}_{u}}\exp{(\cos(\hat{\theta}_{uj})/\tau)}},
\end{gather}
where $(u,i)\in\Set{O}^{+}$ is one observed interaction of user $u$, while $\Set{N}_{u}=\{j|y_{uj}=0\}$ is the set of sampled unobserved items that $u$ did not interact with before; $\tau$ is the hyper-parameter known as the temperature in softmax \cite{tau}.
Nonetheless, modifying softmax loss to enhance the discriminative power of representations and alleviate the popularity bias remains largely unexplored.
Therefore, \wx{our work aims to} devise a more generic and broadly-applicable variant of softmax loss for CF tasks, which can improve the long-tail performance fundamentally.
\section{Methodology of BC Loss}
On the basis of softmax loss, we devise our BC loss and present its desirable characteristics.

\subsection{{Popularity Bias Extractor}}
Before mitigating popularity bias, we need to quantify the influence of popularity bias on a single user-item pair.
One straightforward solution is to compare the performance difference between the biased and unbiased evaluations.
\wx{However, this is not feasible as} the unbiased data is usually unavailable in practice.
Statistical metrics of popularity could be a reasonable proxy of the biased information, such as user popularity statistics $p_{u}\in\Space{P}$ (\ie the number of historical items that user $u$ has interacted with before) and item popularity statistics $p_{i}\in\Space{P}$ (\ie the number of observed interactions that item $i$ is involved in).
If \wx{the impact of} the interaction between $u$ and $i$ can be captured well based solely on such statistics, the model is susceptible to exploiting popularity bias for prediction.
Hence, we argue that the popularity-only prediction will delineate the influence of bias.

Towards this end, we first train an additional module, termed popularity bias extractor, which only takes the popularity statistics as input to make prediction.
\wx{Similar} to the modeling of CF (\cf Section \ref{sec:task-formulation}), the bias extractor is formulated as a function $\hat{y}_b:\Space{P}\times\Space{P}\rightarrow\Space{R}$:
\begin{gather}\label{eq:bias-prediction}
    \hat{y}_b(p_u,p_i) = s(\psi_b(p_u), \phi_b(p_i)) \doteq cos(\hat{\xi}_{ui}),
\end{gather}
where the user popularity encoder $\psi_b:\Space{P}\rightarrow\Space{R}^d$ and the item popularity encoder $\phi_b:\Space{P}\rightarrow\Space{R}^d$ map the popularity statistics of user $u$ and item $i$ into $d$-dimensional popularity embeddings $\psi_b(p_u)$ and $\phi_b(p_i)$, respectively;
$s:\Space{R}^d\times\Space{R}^d\rightarrow\Space{R}$ is \wx{the cosine similarity function} between popularity embeddings (\cf Equation \eqref{eq:cosine-similarity}).
$\hat{\xi}_{ui}$ is the angle between $\psi_b(p_u)$ and $\phi_b(p_i)$.

We then minimize the following softmax loss to optimize the popularity bias extractor:
\begin{gather} \label{eq:bias-extractor}
    \Lapl_b = -\sum_{(u,i)\in\Set{O}^{+}}\log\frac{\exp{(\cos(\hat{\xi}_{ui})/\tau)}}{\exp{(\cos(\hat{\xi}_{ui})/\tau)}+\sum_{j\in\Set{N}_{u}}\exp{(\cos(\hat{\xi}_{uj})/\tau)}}.
\end{gather}
This optimization enforces the extractor to reconstruct the historical interactions \za{using only biased information (\ie popularity statistics)} and makes the reconstruction reflect the interaction-wise bias degree.
\za{As shown in Appendix \ref{sec:bias_extractor}, interactions with active users and popular items are inclining to learn well via Equation \eqref{eq:bias-extractor}.}
Furthermore, we can distinguish hard interactions based on the bias degree, \wx{\ie the interactions that can be hardly predicted by popularity statistics} ought to be more informative for representation learning in the target CF model.
In a nutshell, the popularity bias extractor underscores the bias degree of each user-item interaction, which substantively reflects how hard it is to be predicted.

\subsection{BC Loss}
We move on to \wx{devise} a new BC loss for the target CF model.
Our BC loss stems from softmax loss but converts the interaction-bias degrees into the bias-aware angular margins among the representations to enhance the discriminative power of representations.
Our BC loss is:
\begin{align}\label{equ:bc_loss}
    \Lapl_{\text{BC}} =
    -\sum_{(u,i)\in\Set{O}^{+}}\log\frac{\exp{(\cos(\hat{\theta}_{ui}+M_{ui})/\tau)}}{\exp{(\cos(\hat{\theta}_{ui}+M_{ui})/\tau)}+\sum_{j\in\Set{N}_{u}}\exp{(\cos(\hat{\theta}_{uj})/\tau)}},
\end{align}
where $M_{ui}$ is the bias-aware angular margin for the interaction $(u,i)$ defined as:
\begin{gather}
    M_{ui} = \min \{\hat{\xi}_{ui}, \pi - \hat{\theta}_{ui}\}
\end{gather}
where $\hat{\xi}_{ui}$ is derived from the popularity bias extractor (\cf Equation \eqref{eq:bias-prediction}), and $\pi - \hat{\theta}_{ui}$ is the upper bound to restrict $\cos(\cdot + M_{ui})$ to be a monotonically decreasing function.
Intuitively, if a user-item pair $(u,i)$ is the hard interaction \wx{that can hardly be} reconstructed by its popularity statistics, it holds a high value of $\hat{\xi}_{ui}$ and leads to a high value of $M_{ui}$; henceforward, BC loss imposes the large angular margin $M_{ui}$ between the negative item $j$ and positive item $i$ and optimizes the representations of user $u$ and item $i$ to lower $\hat{\xi}_{ui}$.
See more details and analyses in Section \ref{sec:three_advantages}.

\wx{It is noted that} BC loss is extremely easy to implement in recommendation tasks, which only needs to revise several lines of code.
Moreover, compared with softmax loss, BC loss only adds negligible computational complexity during training 
\za{(\cf Table \ref{tab:elapse_time})}
but achieves more discriminative representations. 
Hence, we recommend \wx{to use} BC loss not only as a debiasing strategy to alleviate the popularity bias, but also as a standard loss in recommender models to enhance the discriminative power.
Note that the modeling of $M_{ui}$ is worth exploring, such as the more complex version $M_{ui} = \min\{\lambda \cdot \hat{\xi}_{ui}, \pi - \hat{\theta}_{ui}\}$ where $\lambda$ controls the strength of the bias-margin.
Meanwhile, carefully designing a monotonically decreasing function helps to get rid of the upper bound restriction.
We will leave the exploration of bias-margin in future work.

\section{Analyses of BC Loss}\label{sec:three_advantages}

We analyze desirable characteristics of BC loss.
Specifically, we start by presenting its geometric interpretation, and then show its theoretical properties \wrt compactness and dispersion of representations. The hard mining mechanism of BC loss is discussed in Appendix \ref{sec:hard_example}. 

\begin{figure}[t]
	\centering
	\subcaptionbox{Softmax Loss (2D)\label{fig:2d_softmax}}{
	    \vspace{-5pt}
		\includegraphics[width=0.23\linewidth]{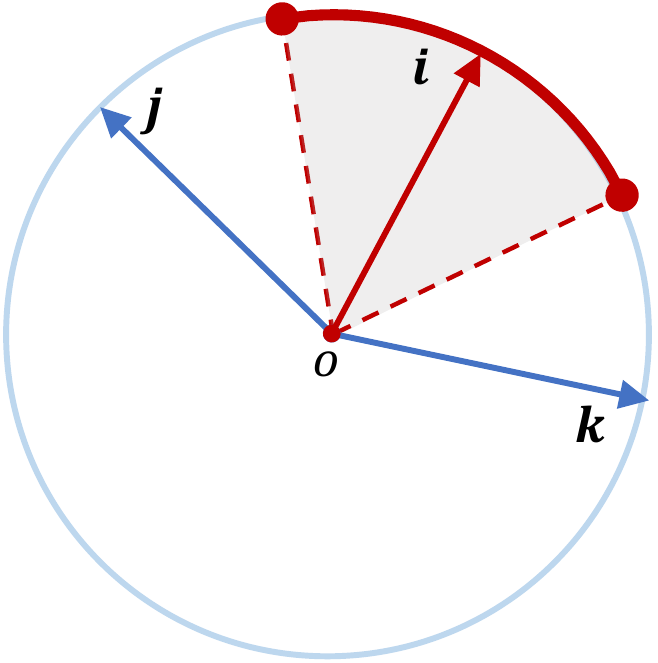}}
	\subcaptionbox{BC Loss (2D)\label{fig:2d_bc}}{
	    \vspace{-5pt}
		\includegraphics[width=0.23\linewidth]{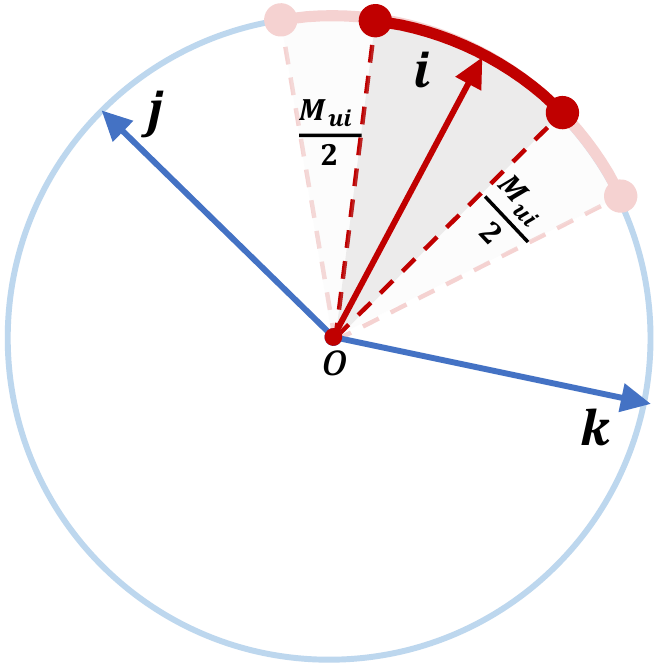}}
	\subcaptionbox{Softmax Loss (3D)\label{fig:3d_softmax}}{
	    \vspace{-5pt}
		\includegraphics[width=0.23\linewidth]{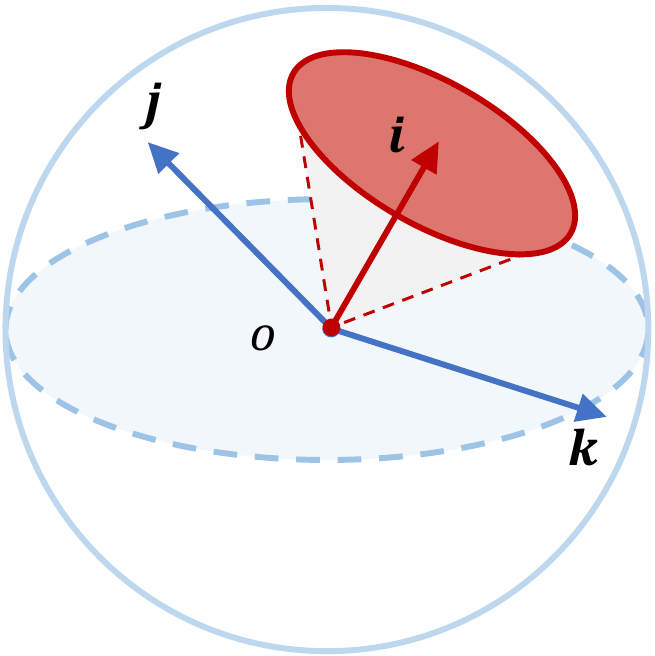}}
	\subcaptionbox{BC Loss (3D)\label{fig:3d_bc}}{
	    \vspace{-5pt}
		\includegraphics[width=0.23\linewidth]{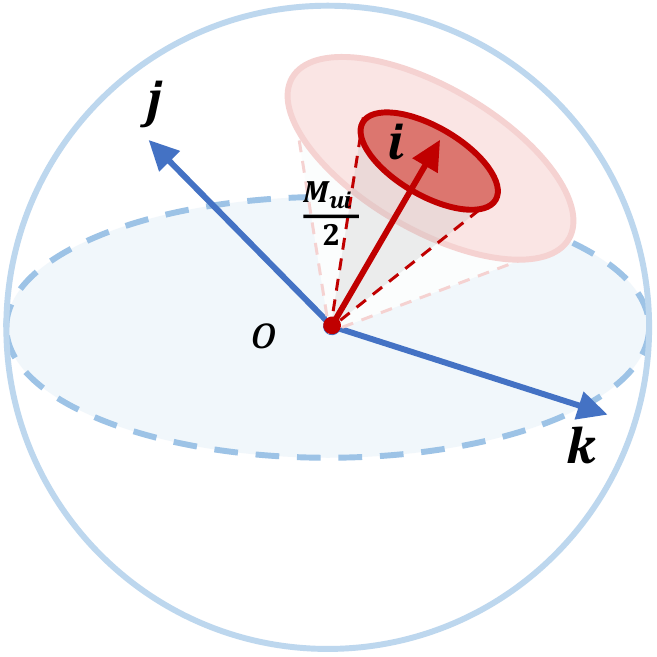}}
	\caption{Geometric Interpretation of softmax loss and BC loss in 2D and 3D hypersphere. 
	The dark red region indicates the discriminative user constraint, while the light red region is for comparison.}
	\label{fig:geometric_interpretation}
	\vspace{-10pt}
\end{figure}

\subsection{\textbf{Geometric Interpretation}}
Here we probe into the ranking criteria of softmax loss and BC loss, from the geometric perspective.
To simplify the geometric interpretation, we analyze one user $u$ with one observed item $i$ and only two unobserved items $j$ and $k$.
Then the posterior probabilities obtained by softmax loss are: $\frac{\exp{(\cos(\hat{\theta}_{ui})/\tau)}}{\exp{(\cos(\hat{\theta}_{ui})/\tau)}+\exp{(\cos(\hat{\theta}_{uj})/\tau)}+\exp{(\cos(\hat{\theta}_{uk})/\tau)}}$.
During training, softmax loss encourages the ranking criteria $\hat{\theta}_{ui} < \hat{\theta}_{uj}$ and $\hat{\theta}_{ui} < \hat{\theta}_{uk}$ to model the basic assumption that the observed interaction $(u,i)$ indicates more positive cues of user preference than the unobserved interactions $(u,j)$ and $(u,k)$.

Intuitively, to make the ranking criteria more stringent, we can impose an angular margin $M_{ui}$ on it and establish a new criteria $\hat{\theta}_{ui} + M_{ui} < \hat{\theta}_{uj}$ and $\hat{\theta}_{ui} + M_{ui} < \hat{\theta}_{uk}$.
Directly formulating this idea arrives at the posterior probabilities of BC loss: $\frac{\exp{(\cos(\hat{\theta}_{ui}+ M_{ui})/\tau)}}{\exp{(\cos(\hat{\theta}_{ui}+M_{ui})/\tau)}+\exp{(\cos(\hat{\theta}_{uj})/\tau)}+\exp{(\cos(\hat{\theta}_{uk})/\tau)}}$.
Obviously, BC loss is more rigorous about the ranking assumption compared with softmax loss. See Appendix \ref{sec:hard_example} for more detailed explanations.

We then depict the geometric interpretation and comparison of softmax loss and BC loss in Figure \ref{fig:geometric_interpretation}.
Assume the learned representations of $i$, $j$, and $k$ are given, and softmax and BC losses are optimized to the same value.
In softmax loss, the constraint boundaries for correctly ranking user $u$'s preference are $\hat{\theta}_{ui} = \hat{\theta}_{uj}$ and $\hat{\theta}_{ui} = \hat{\theta}_{uk}$; whereas, in BC loss, the constraint boundaries are $\hat{\theta}_{ui} + M_{ui} = \hat{\theta}_{uj}$ and $\hat{\theta}_{ui} + M_{ui} = \hat{\theta}_{uk}$.
Geometrically, from softmax loss (\cf Figure \ref{fig:3d_softmax}) to BC loss (\cf Figure \ref{fig:3d_bc}), it is a more stringent circle-like region on the unit sphere in the 3D case.
Further enlarging the margin $M_{ui}$ will lead to a smaller hyperspherical cap, which is an explicit discriminative constraint on a manifold.
As a result, limited constraint regions squeeze the tightness of similar items \wx{and encourages the separation of dissimilar items.}
Moreover, with the increase of representation dimension, BC loss has more restricted learning requirements, exponentially decreasing the area of constraint regions for correct ranking, and becomes progressively powerful to learn discriminative representations.

\subsection{\textbf{Theoretical Properties}}
BC loss improves head and tail representation learning by enforcing the compactness of matched users and items, while imposing the dispersion of unmatched users and items. See detailed proof in Appendix \ref{sec:proof}.


\begin{restatable}{theorem}{bclosstheorem} \label{theorem}
Let $\Mat{v}_u \doteq \psi(u)$, $\Mat{v}_i \doteq \phi(i)$, and $\Mat{c}_{u} = \frac{1}{|\Set{P}_u|}\sum_{i\in \Set{P}_{u}}\Mat{v}_i$, $\Mat{c}_{i} = \frac{1}{|\Set{P}_{i}|} \sum_{u\in\Set{P}_{i}} \Mat{v}_{u}$, where $\Set{P}_u = \{i|y_{ui} =1\}$ and $\Set{N}_{u}=\{i|y_{ui}=0\}$ are the sets of user $u$'s positive and negative items, respectively; $\Set{P}_i = \{u|y_{ui} =1\}$ is the set of item $i$'s positive users.
Assuming the representations of users and items are normalized, the minimization of BC loss is equivalent to minimizing a compactness part and a dispersion part simultaneously:
\begin{align}
    \Lapl_{BC} \geq \underbrace{\sum_{u\in\Set{U}}\norm{\Mat{v}_u-\Mat{c}_u}^2 + \sum_{i\in\Set{I}}\norm{\Mat{v}_i-\Mat{c}_i}^2}_{\text{Compactness part}}\underbrace{-\sum_{u\in\Set{U}}\sum_{j\in\Set{N}_{u}}\norm{\Mat{v}_u-\Mat{v}_j}^2}_{\text{Dispersion part}}
    \propto \underbrace{H(\Mat{V}|Y)}_{\text{Compactness}}\:\: \underbrace{-H(\Mat{V})}_{\text{Dispersion}}.
\end{align}
\end{restatable}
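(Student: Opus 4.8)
The plan is to lower-bound $\Lapl_{\text{BC}}$ through a chain of elementary inequalities and then recognize the resulting quadratic form as a compactness-minus-dispersion expression. First I would rewrite each summand as $\log\!\bigl(1+\sum_{j\in\Set{N}_u}\exp(z_{uj})\bigr)$ with $z_{uj}=\bigl(\cos\hat\theta_{uj}-\cos(\hat\theta_{ui}+M_{ui})\bigr)/\tau$, drop the $+1$ (which only decreases the bound), and apply AM--GM to the surviving $\log\sum_j\exp(z_{uj})$ to obtain $\log|\Set{N}_u|+\tfrac{1}{|\Set{N}_u|}\sum_{j}z_{uj}$. This linearizes the log-sum-exp. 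Since $M_{ui}=\min\{\hat\xi_{ui},\pi-\hat\theta_{ui}\}\ge 0$ forces $\hat\theta_{ui}+M_{ui}\in[0,\pi]$, where $\cos$ is decreasing, we have $-\cos(\hat\theta_{ui}+M_{ui})\ge-\cos\hat\theta_{ui}$, so the bound only weakens when the margin-shifted term is replaced by $\cos\hat\theta_{ui}$.

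Next, using the normalization $\norm{\Mat{v}_u}=\norm{\Mat{v}_i}=1$, I would substitute $\cos\hat\theta_{ab}=1-\tfrac12\norm{\Mat{v}_a-\Mat{v}_b}^2$ so that $\cos\hat\theta_{uj}-\cos\hat\theta_{ui}=\tfrac12\norm{\Mat{v}_u-\Mat{v}_i}^2-\tfrac12\norm{\Mat{v}_u-\Mat{v}_j}^2$. Collecting terms over $\Set{O}^+$ and discarding the nonnegative constant $\sum_{(u,i)}\log|\Set{N}_u|$, this shows $\Lapl_{\text{BC}}$ is bounded below --- up to positive constants absorbing $\tau$ and the cardinalities $|\Set{N}_u|$ --- by $\sum_{(u,i)\in\Set{O}^+}\norm{\Mat{v}_u-\Mat{v}_i}^2-\sum_{u\in\Set{U}}\sum_{j\in\Set{N}_u}\norm{\Mat{v}_u-\Mat{v}_j}^2$: a positive-pair attraction term minus a negative-pair repulsion term.

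The crux is turning the attraction term into the two centroid sums. I would write $\sum_{(u,i)\in\Set{O}^+}\norm{\Mat{v}_u-\Mat{v}_i}^2=\tfrac12\sum_{u}\sum_{i\in\Set{P}_u}\norm{\Mat{v}_u-\Mat{v}_i}^2+\tfrac12\sum_{i}\sum_{u\in\Set{P}_i}\norm{\Mat{v}_u-\Mat{v}_i}^2$ (the single sum grouped by user and by item), then apply Jensen to the convex map $\norm{\cdot}^2$ inside each inner average: $\tfrac{1}{|\Set{P}_u|}\sum_{i\in\Set{P}_u}\norm{\Mat{v}_u-\Mat{v}_i}^2\ge\norm{\Mat{v}_u-\Mat{c}_u}^2$, and symmetrically $\ge\norm{\Mat{v}_i-\Mat{c}_i}^2$ for items. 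This yields exactly the Compactness part $\sum_{u}\norm{\Mat{v}_u-\Mat{c}_u}^2+\sum_{i}\norm{\Mat{v}_i-\Mat{c}_i}^2$, while the repulsion term is retained verbatim as the Dispersion part. For the final ``$\propto H(\Mat{V}\mid Y)-H(\Mat{V})$'' reading I would invoke the standard Gaussian/uniformity correspondence: modeling each conditional $\Mat{V}\mid Y$ as an isotropic Gaussian makes the sum of squared deviations from the conditional centroids a monotone surrogate for the conditional differential entropy $H(\Mat{V}\mid Y)$, and maximizing the mean pairwise squared distance is the usual proxy for pushing the marginal representation distribution toward the uniform law on the sphere, i.e., maximizing $H(\Mat{V})$.

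I expect the main difficulty to be bookkeeping rather than a single deep step: keeping every inequality oriented as ``$\ge$'' through the margin removal, the AM--GM, and the Jensen steps, and being candid that the constant-free display holds only after the sample-size factors $|\Set{N}_u|,|\Set{P}_u|,|\Set{P}_i|$ and $\tau$ are absorbed into the proportionality --- so the honest intermediate claim is $\Lapl_{\text{BC}}\ge a\cdot(\text{Compactness})-b\cdot(\text{Dispersion})+c$ for positive $a,b,c$. The genuinely nonobvious move is the two-way grouping that makes both $\Mat{c}_u$ and $\Mat{c}_i$ emerge from the single attraction term; and the entropy identification is interpretive, not a rigorous equality, which I would flag as such.
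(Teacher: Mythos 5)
Your proposal is correct and follows essentially the same route as the paper's proof in Appendix \ref{sec:proof}: discharge the margin to get a lower bound, linearize the log-sum-exp over negatives via Jensen/AM--GM, convert cosines to squared Euclidean distances using the normalization identity $\Mat{v}_u^{T}\Mat{v}_i = 1-\tfrac{1}{2}\norm{\Mat{v}_u-\Mat{v}_i}^2$, regroup the positive-pair term into the user- and item-centroid sums, and invoke the Gaussian/uniformity surrogates for the entropy reading. Your two local refinements --- bounding $\cos(\hat{\theta}_{ui}+M_{ui})\leq\cos(\hat{\theta}_{ui})$ by monotonicity instead of introducing the constant $m$, and using Jensen (the bias--variance decomposition) rather than an equality-up-to-constants to produce $\norm{\Mat{v}_u-\Mat{c}_u}^2$ --- are, if anything, slightly more careful than the paper's, and your candor about absorbing $\tau$, $|\Set{N}_u|$, $|\Set{P}_u|$, $|\Set{P}_i|$ into the proportionality matches the looseness the paper itself hides in its $\eqc$ notation.
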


\noindent\textbf{Discussion.}
$\Mat{c}_{u}$ is the averaged representations of all items that $u$ has interacted with, which describes $u$'s interest; similarly, $\Mat{c}_{i}$ profiles item $i$'s user group.
For the compactness part, BC loss forces the user's positive items to be user-centric and vice versa.
\wx{From} the entropy perspective, compactness part tends to learn a low-entropy cluster for positive interactions, \ie high compactness for similar users and items.
For the dispersion part, for users and items from unobserved interactions, BC loss maximizes the pairwise euclidean distance between their representations and encourages them to be distant from each other;
\wx{Hence, from} the entropy viewpoint, dispersion part levers the spread of representations to learn a high-entropy representation space, \ie large separation degree for dissimilar users and items.

\section{Experiments}
We aim to answer the following research questions:
\begin{itemize}[leftmargin=*]
   \item \textbf{RQ1:} How does BC Loss perform compared with debiasing strategies in various evaluations?
   \wx{\item \textbf{RQ2:} Does BC loss cause the trade-off between head and tail performance?}
   \item \textbf{RQ3:} What are the impacts of the components (\eg temperature, margin) on BC Loss?
\end{itemize}

\noindent\textbf{Baselines \& Datasets.} 
SOTA debiasing strategies in various research lines are compared: sample re-weighting (IPS-CN \cite{IPS-CN}), bias removal by causal inference (MACR \cite{MACR}, CausE \cite{CausE}), and regularization-based framework (sam+reg \cite{Regularized_Optimization}).
Extensive experiments are conducted on \za{eight} real-world benchmark datasets: Tencent \cite{Tencent}, Amazon-Book \cite{Amazon-Book}, Alibaba-iFashion \cite{Alibaba-ifashion}, Yelp2018 \cite{LightGCN}, Douban Movie \cite{Douban}, \za{Yahoo!R3 \cite{Yahoo}, Coat \cite{ips}} and KuaiRec \cite{KuaiRec}.
For comprehensive comparisons, almost all standard test distributions in CF are covered in the experiments: balanced test set \cite{MACR,DICE,KDCRec}, randomly selected imbalanced test set \cite{ESAM,ZhuWC20}, temporal split test set \cite{PDA,DecRS,Regularized_Optimization}, and unbiased test set \cite{ips,KuaiRec,Yahoo}.
See more experiments on KuaiRec, \za{Yahoo!R3, and Coat} for unbiased test evaluation in Appendix \ref{sec:KuaiRec} and \za{more comparison results between BC loss and other standard losses (most widely used BPR \cite{BPR}, newest proposed CCL \cite{CCL} and SSM \cite{SSM}) in Appendix \ref{sec:standard_loss}.}

\begin{table*}[t]
\caption{Overall debiasing performance comparison in balanced and imbalanced test sets. 
}
\vspace{-5pt}
\label{tab:overall}
\resizebox{\textwidth}{!}{
\begin{tabular}{l|cc|cc|cc|cc|cc|cc}
\toprule
\multicolumn{1}{c|}{} & \multicolumn{4}{c|}{Tencent}       & \multicolumn{4}{c|}{Amazon-book}   & \multicolumn{4}{c}{Alibaba-iFashion} \\
\multicolumn{1}{c|}{} &
  \multicolumn{2}{c}{Balanced} &
  \multicolumn{2}{c|}{Imbalanced} &
  \multicolumn{2}{c}{Balanced} &
  \multicolumn{2}{c|}{Imbalanced} &
  \multicolumn{2}{c}{Balanced} &
  \multicolumn{2}{c}{Imbalanced} \\
\multicolumn{1}{c|}{} & Recall & NDCG   & Recall & NDCG   & Recall & NDCG   & Recall & NDCG   & Recall  & NDCG    & Recall  & NDCG   \\
\midrule
MostPop   &  0.0002     &   0.0002 &  0.0384     &    0.0208    &     0.0001   &      0.0001  & 0.0102    &    0.0063    &      0.0003      &  0.0001   & 0.0212     &   0.0084      \\\midrule
MF                   & 0.0052 & 0.0040 & \underline{0.0982} & \underline{0.0643} & 0.0109 & 0.0103 & \underline{0.0856} & \underline{0.0638} & 0.0056  & 0.0028  & \underline{0.0843}  & \underline{0.0411} \\
+ IPS-CN             & \underline{0.0075} & \underline{0.0058} & 0.0686 & 0.0421 & 0.0132 & 0.0123 & 0.0765 & 0.0554 & 0.0050  & 0.0027  & 0.0551  & 0.0255 \\
+ CausE              & 0.0056 & 0.0043 & 0.0687 & 0.0468 & 0.0115 & 0.0105 & 0.0720 & 0.0551 & 0.0005  & 0.0003  & 0.0185  & 0.0086 \\
+ sam+reg &  0.0070 &  0.0054  & 0.0406 & 0.0266 &  0.0141  & 0.0132  & 0.0599  &   0.0443  &  0.0067    &  0.0032       &      0.0305   &  0.0146      \\
+ MACR               & 0.0067 & 0.0046 & 0.0326 & 0.0241 & \underline{0.0181} & \underline{0.0146} & 0.0292 & 0.0229 & \underline{0.0086}  & \underline{0.0041}  & 0.0650  & 0.0331 \\
+ BC Loss            & \textbf{0.0087}* & \textbf{0.0068}* & \textbf{0.1298}* & \textbf{0.0904}* & \textbf{0.0221}* & \textbf{0.0202}* & \textbf{0.1198}* & \textbf{0.0948}* & \textbf{0.0095}*  & \textbf{0.0048}*  & \textbf{0.0967}*  & \textbf{0.0487}* \\\midrule
Imp. \%              &  16.0\%  &  17.2\%      &    32.2\%  &  40.1\%      &   22.1\%     &    38.4\%    &    40.0\%    &  49.6\%      &    10.5\%     &    17.1\%     &    14.7\%     & 18.5\%        \\\midrule\midrule
LightGCN             & 0.0055 & 0.0042 & \underline{0.1065} & \underline{0.0712} &0.0123 & 0.0116 & \underline{0.0941} & \underline{0.0724} & 0.0036  & 0.0017  & \underline{0.0660}  & \underline{0.0322} \\
+ IPS-CN             & 0.0072 & 0.0054 & 0.0900 & 0.0599 & 0.0148 & 0.0136 & 0.0836 & 0.0639 & 0.0038  & 0.0017  & 0.0658  & 0.0317 \\
+ CausE              & 0.0055 & 0.0040 & 0.0966 & 0.0665 & 0.0134 & 0.0121 & 0.0926 & 0.0717 & 0.0029  & 0.0013  & 0.0449  & 0.0221 \\
+ sam+reg & \underline{0.0076}  & \underline{0.0056}  & 0.0653  & 0.0436  &   0.0157 & 0.0149  &  0.0773  &  0.0600  &    \underline{0.0056} & \underline{0.0027}    & 0.0502  &  0.0252 \\
+ MACR               & 0.0075 & 0.0050 & 0.0731 & 0.0532 & \underline{0.0183} & \underline{0.0153} & 0.0767 & 0.0600 & 0.0033  & 0.0015  & 0.0475  & 0.0238 \\
+ BC Loss            & \textbf{0.0095}* & \textbf{0.0073}* & \textbf{0.1194}* & \textbf{0.0832}* & \textbf{0.0257}* & \textbf{0.0227}* & \textbf{0.1123}* & \textbf{0.0903}* & \textbf{0.0077}*  & \textbf{0.0037}*  & \textbf{0.0992}*  & \textbf{0.0510}* \\\midrule
Imp. \%              &   25.0\%     &  30.1\%      &   12.1\%     &    16.9\%    &  40.4\%      &    48.4\%    &   19.3\%     &   24.7\%     &   37.5\%      &   37.0\%      &   50.3\%      &   58.4\%    \\
\bottomrule
\end{tabular}}
\vspace{-15pt}
\end{table*}

\subsection{Performance Comparison (RQ1)}

\subsubsection{\textbf{Evaluations on Imbalanced and Balanced Test Sets}}
\noindent\textbf{Motivation.}
Many prevalent debiasing methods assume that test distribution is known in advance \cite{MACR,DICE,ESAM}, \ie the validation set has similar distribution with the test set. 
Moreover, only an imbalanced or balanced test set is evaluated.
However, in real-world applications, the test distributions are usually unavailable and can even reverse the prior in the training distribution.
We conjecture that a good debiasing recommender is required to perform well on both imbalanced and balanced test distributions. 
In our settings, no information about the balanced test is provided in advance.

\noindent\textbf{Data Splits.}
The models are identical across both imbalanced and balanced evaluations.
The test distribution in the balanced evaluation is uniform, \ie randomly sample $15\%$ of interactions with equal probability \wrt items.
Besides, the test splits for the imbalanced test are similarly long-tailed like the train and validation sets, \ie randomly split the remaining interactions into training, validation, and imbalanced test sets ($60\%:10\%:15\%$).

\noindent\textbf{Results.} Table \ref{tab:overall} reports the comparison of performance in imbalanced and balanced test evaluations.
The best performing methods are bold and starred, while the strongest baselines are underlined; Imp.\% measures the relative improvements of BC loss over the strongest baselines.
We observe that:
\begin{itemize} [leftmargin = *]
\item \textbf{BC loss significantly outperforms the state-of-the-art baselines 
in both balanced and imbalanced evaluations across all datasets.}
In particular, it achieves consistent improvements over the best debiasing baselines and original CF models by $12.1\%\sim 58.4\%$.
This clearly demonstrates that BC loss not only effectively alleviates the amplification of popularity bias but also improves the discriminative power of representations.
\wx{Moreover, Table \ref{tab:elapse_time} shows the computational costs of all methods. Compared to the backbone models, BC loss only adds negligible time complexity.}

\item \textbf{Debiasing baselines sacrifice the imbalanced performance and perform inconsistently across datasets.}
Debiasing strategies generally achieve higher balanced performance at the expense of a large imbalanced performance drop. 
Specifically, the strongest baselines over all imbalanced test sets are \wx{the} original CF models.
Worse still, as the degree of data sparsity increases, some debiasing methods fail to quantify the popularity bias and limit their bias removal ability.  
For example, in the sparest Alibaba-iFashion dataset, the results of MF+IPS-CN, MF+CausE, LightGCN+MACR, and LightGCN+CausE on the balanced evaluation are lower than original CF models (MF or LightGCN).
In contrast, benefiting from popularity bias-aware margin, BC loss can learn discriminative representations that accomplish more profound user and item understanding, leading to higher head and tail recommendation quality.
\end{itemize}

\subsubsection{\textbf{Evaluations on Temporal Split Test Set}}
\noindent\textbf{Motivation.}
\begin{wraptable}{r}{0.6\columnwidth}
  \centering
  \vspace{-25pt}
  \caption{The performance comparison on Douban dataset. }
  \label{tab:time_split}
  \vspace{-5pt}
  \resizebox{0.6\columnwidth}{!}{
  \begin{tabular}{l|ccc|ccc}
  \toprule
   & \multicolumn{3}{c|}{MF} & \multicolumn{3}{c}{LightGCN} \\
  \multicolumn{1}{c|}{} & HR & Recall & NDCG & HR & Recall & NDCG \\\midrule
  Backbone & \underline{0.2924} & \underline{0.0294} & \underline{0.0472} & \underline{0.3543} & \underline{0.0313} & \underline{0.0602} \\
  + IPS-CN & 0.2514 & 0.0174 & 0.0324 & 0.3212 & 0.0261 & 0.0502 \\
  + CausE & 0.2725 & 0.0203 & 0.0376 & 0.3403 & 0.0275 & 0.0514 \\
  + sam+reg & 0.2826 & 0.0191 & 0.0390 & 0.2944 & 0.0252 &  0.0488 \\
  + MACR & 0.1084 & 0.0087 & 0.0163 & 0.3127 & 0.0271 & 0.0519 \\
  + BC loss & \textbf{0.3742}* & \textbf{0.0324}* & \textbf{0.0601}* & \textbf{0.3562}* & \textbf{0.0346}* & \textbf{0.0652}* \\\midrule
  Imp. \% & 28.0\% & 10.2\% & 27.3\% & 0.5\% & 10.4\% & 8.3\% \\\bottomrule
  \end{tabular}}
  \vspace{-15pt}
\end{wraptable}
In real applications, popularity bias dynamically changes over time.
Here we consider temporal split test evaluation on Douban Movie where the historical interactions are sliced into the training, validation, and test sets (7:1:2) according to the timestamps.

\noindent\textbf{Results.}
As Table \ref{tab:time_split} shows, BC loss is steadily superior to all baselines \wrt all metrics on Douban Movie.
For instance, it achieves significant improvements over the MF and LightGCN backbones \wrt Recall@20 by 10.2\% and 10.4\%, respectively.
This validates that BC loss endows the backbone models with better robustness against the popularity distribution shift and alleviates the negative influence of popularity bias.
Surprisingly, none of the debiasing baselines could maintain a comparable performance to the backbones.
We ascribe the failure to their preconceived idea of tail items, which possibly change over time.


\begin{table*}[t]
  \caption{The performance evaluations of head, mid, and tail on Tencent dataset. 
  }
  \label{tab:subgroups}
  \vspace{-5pt}
  \resizebox{\textwidth}{!}{\begin{tabular}{l|llll|llll}
  \toprule
  \multicolumn{1}{c|}{} & \multicolumn{4}{c|}{Balanced NDCG@20} & \multicolumn{4}{c}{Imbalanced NDCG@20} \\
  \multicolumn{1}{c|}{} & \multicolumn{1}{c}{Tail} & \multicolumn{1}{c}{Mid} & \multicolumn{1}{c}{Head} & \multicolumn{1}{c|}{Overall} & \multicolumn{1}{c}{Tail} & \multicolumn{1}{c}{Mid} & \multicolumn{1}{c}{Head} & \multicolumn{1}{c}{Overall} \\\midrule
  MF & 0.00004 & 0.00097 & 0.01250 & 0.00402 & 0.00021 & 0.00197 & 0.06837 & 0.06431 \\
  + IPS-CN & $0.00009 ^{\color{+}+125 \%}$ & $ 0.00212^{\color{+}+ 119\%}$ & $ 0.01684^{\color{+}+ 35\%}$ & $ 0.00575^{\color{+}+43 \%}$ & $0.00056 ^{\color{+}+167 \%}$ & $0.00401 ^{\color{+}+ 104\%}$ & $ 0.04439^{\color{-}-35\%}$ & $ 0.04205^{\color{-}-35\%}$ \\
  + CausE & $0.00008 ^{\color{+}+ 100\%}$ & $ 0.00149^{\color{+}+54 \%}$ & $0.01168 ^{\color{-}-7\%}$ & $ 0.00430^{\color{+}+ 7\%}$ & $ 0.00038^{\color{+}+81 \%}$ & $0.00253 ^{\color{+}+ 28\%}$ & $ 0.04876^{\color{-}- 29\%}$ & $ 0.04680^{\color{-}- 27\%}$ \\
  + sam-reg & $0.00006 ^{\color{+}+50 \%}$ & $0.00135 ^{\color{+}+39 \%}$ &
  $ 0.01573^{\color{+}+ 26\%}$ & $ 0.00535^{\color{+}+ 33\%}$ & $0.00011 ^{\color{-}-48 \%}$ &$0.00281 ^{\color{+}+43 \%}$ & $0.02850 ^{\color{-}-58 \%}$ & $0.02661 ^{\color{-}-59 \%}$\\
  + MACR & $\mathbf{0.00188} ^{\color{+}+ 4600\%}$ & $\mathbf{0.00521} ^{\color{+}+437 \%}$ & $0.00555 ^{\color{-}-56 \%}$ & $0.00456 ^{\color{+}+13 \%}$ & $\mathbf{ 0.00370}^{\color{+}+ 1662\%}$ & $ 0.00615^{\color{+}+ 212\%}$ & $ 0.02748^{\color{-}- 60\%}$ & $ 0.02413^{\color{-}- 62\%}$ \\
  + BC loss & $ 0.00024^{\color{+}+ 500\%}$ & $ 0.00355^{\color{+}+ 266\%}$ & $ \mathbf{0.01831}^{\color{+}+46 \%}$ & $ \mathbf{0.00680} ^{\color{+}+ 69\%}$ & $0.00142 ^{\color{+}+ 576\%}$ & $ \mathbf{0.00712}^{\color{+}+261 \%}$ & $\mathbf{0.09552} ^{\color{+}+ 40\%}$ & $\mathbf{0.09040} ^{\color{+}+ 41\%}$ \\\midrule\midrule
  LightGCN & 0.00025 & 0.00193 & 0.01136 & 0.00417 & 0.00094 & 0.00391 & 0.07561 & 0.07121 \\
  + IPS-CN & $0.00140 ^{\color{+}+460 \%}$ & $0.00241 ^{\color{+}+25 \%}$ & $0.01560 ^{\color{+}+ 37\%}$ & $0.00544 ^{\color{+}+ 30\%}$ & $0.00109 ^{\color{+}+ 16\%}$ & $0.00522 ^{\color{+}+34 \%}$ & $ 0.06333^{\color{-}- 16\%}$ & $ 0.05993^{\color{-}- 16\%}$ \\
  + CausE & $ 0.00006^{\color{-}- 76\%}$ & $ 0.00138^{\color{-}- 29\%}$ & $ 0.01177^{\color{+}+ 4\%}$ & $0.00403 ^{\color{-}- 3\%}$ & $ 0.00040^{\color{-}-57 \%}$ & $0.00279 ^{\color{-}- 29\%}$ & $0.06996 ^{\color{-}-7 \%}$ & $0.06650 ^{\color{-}-7 \%}$ \\
  + sam-reg & $0.00006 ^{\color{-}-76 \%}$ & $0.00120 ^{\color{-}-38 \%}$ & $ 0.01727^{\color{+}+52 \%}$ & $0.00560 ^{\color{+}+ 34\%}$ & $0.00024 ^{\color{-}-74 \%}$ & $0.00253 ^{\color{-}- 35\%}$ & $0.04647 ^{\color{-}-39 \%}$ & $0.04355 ^{\color{-}- 39\%}$ \\
  + MACR & $\mathbf{0.00287} ^{\color{+}+ 1048\%}$ & $\mathbf{0.00461} ^{\color{+}+139 \%}$ & $0.00454 ^{\color{-}- 60\%}$ & $0.00501 ^{\color{+}+20 \%}$ & $ \mathbf{0.00389}^{\color{+}+ 313\%}$ & $ \mathbf{0.00635}^{\color{+}+ 62\%}$ & $0.04058 ^{\color{-}- 46\%}$ & $0.05323 ^{\color{-}-25 \%}$ \\
  + BC loss & $0.00057 ^{\color{+}+ 128\%}$ & $0.00321 ^{\color{+}+ 66\%}$ & $\mathbf{0.01943} ^{\color{+}+ 71\%}$ & $\mathbf{0.00730} ^{\color{+}+ 75\%}$ & $ 0.00125^{\color{+}+33 \%}$ & $0.00516 ^{\color{+}+ 32\%}$ & $ \mathbf{0.08823}^{\color{+}+ 17\%}$ & $ \mathbf{0.08320}^{\color{+}+17 \%}$\\\bottomrule
  \end{tabular}}
  \vspace{-10pt}
  \end{table*}

\subsection{Head, Mid, \& Tail Performance (RQ2)}

\noindent\textbf{Motivation.}
To further evaluate whether BC loss lifts the tail performance by inevitably sacrificing the head performance, we divide the test set of Tencent into three subgroups, according to the interaction number of each item: head (popular items that are in the top third), mid (normal items in the middle), and tail (unpopular items in the bottom third).
Most previous studies focus on average NDCG@20 for evaluation, especially balanced test evaluations \cite{MACR,DICE}. 
However, average metrics could be insufficient to reflect the performance of each subgroup.
A trivial solution to achieve high performance is promoting the rankings of low-popularity items in the recommendations.
In this case, only the average metrics are not reliable on the balanced test.
Therefore, we report the performance of individual subgroups on both balanced and imbalanced test sets for a more comprehensive comparison.

\noindent\textbf{Results.}
Table \ref{tab:subgroups} shows the evaluations of the head, mid, and tail subgroups.
The red and blue numbers in percentage separately refer to the improvement and decline of each method relative to the original CF model (MF or LightGCN).
We find that:
\begin{itemize}[leftmargin = *]
    \item \textbf{BC loss is the only method that consistently yields remarkable improvements in every subgroup.}
    With a closer look at the head evaluation, BC loss shows its ability to learn more discriminative representations for popular items across imbalanced and balanced settings.
    In particular, it achieves significant improvements over MF and LightGCN \wrt head NDCG by 40\% and 17\% in the imbalanced test evaluation, respectively.
    We attribute improvements to the usage of bias-aware margin, which boosts the recommendation quality for the tail and head items.
    
    \item \textbf{As the performance comparison among subgraphs in the imbalanced scenario shows, the baselines enhance the tail performance but sacrifice the head performance.}
    Specifically, these baselines hardly maintain the head performance and show a clear trade-off trend between the head and tail performance.
    Taking MACR as an example, although the great improvement ($+1662\%$) over MF is achieved in the tail subgraph, it brings in the dramatic drop ($-62\%$) in the head subgraph, which lowers the overall performance by a big drop ($-60\%$). 
    Here we ascribe the trade-off to blindly promoting the rankings of tail items for matched and unmatched, rather than improving the discriminative power of representations.
\end{itemize}

\begin{figure*}[t]
	\centering
	\subcaptionbox{Varying margins.\label{fig:margin}}{
	    \vspace{-5pt}
		\includegraphics[width=0.43\linewidth]{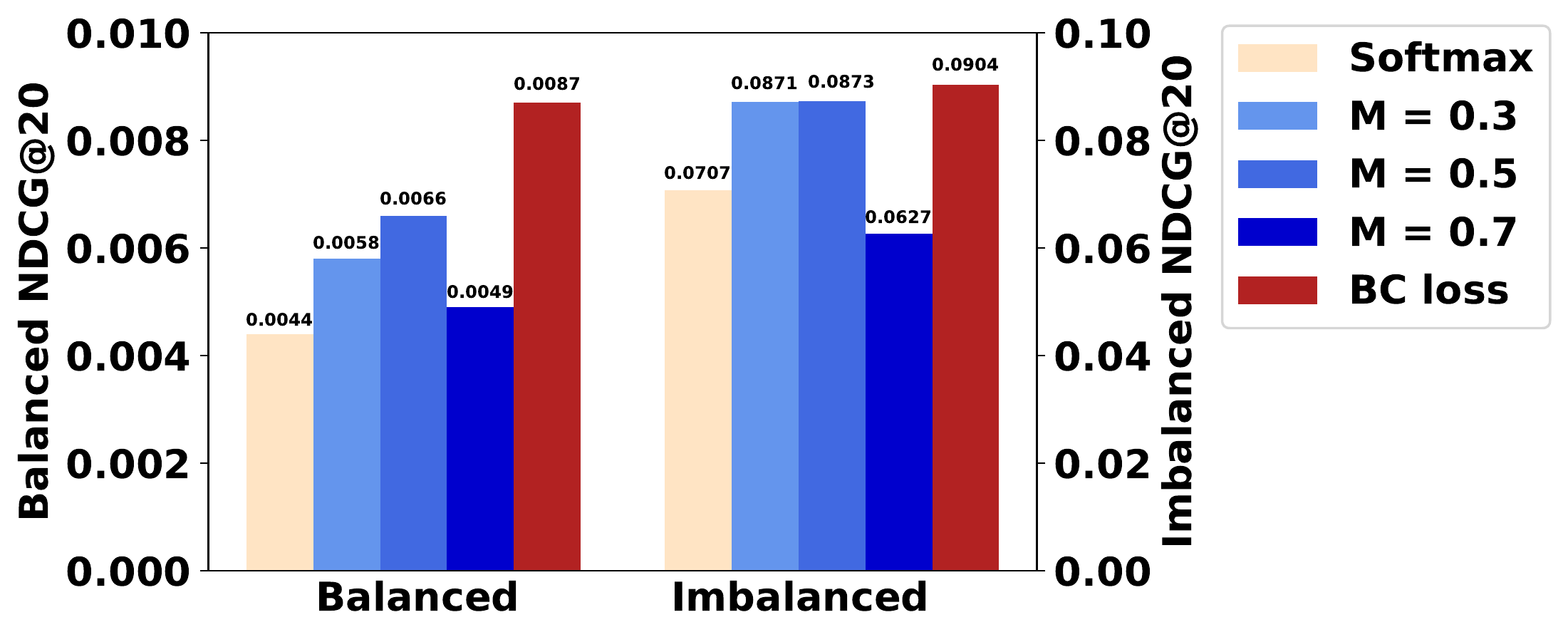}}
	\subcaptionbox{MF+BC loss\label{fig:tau_mf}}{
	    \vspace{-5pt}
		\includegraphics[width=0.26\linewidth]{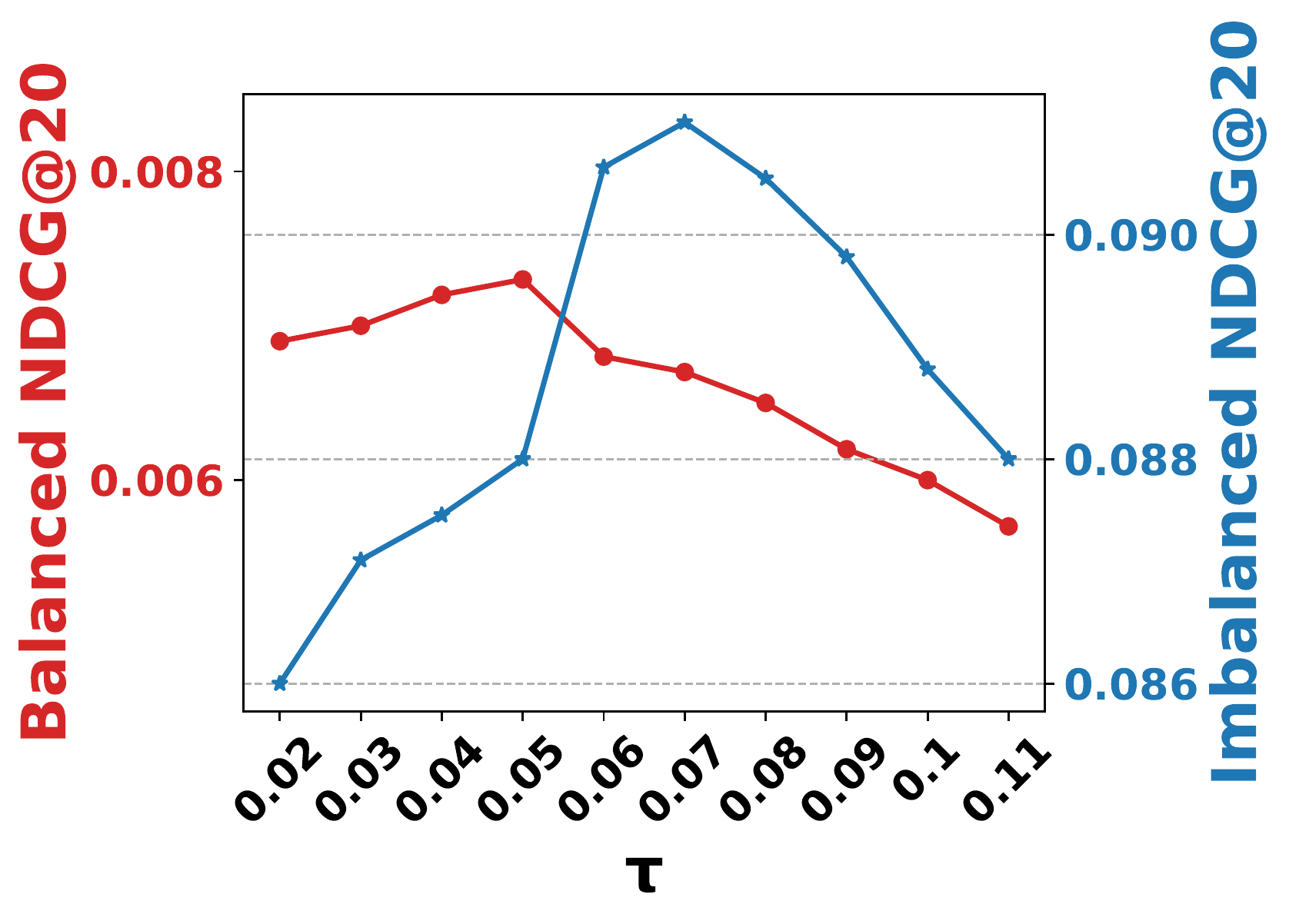}}
	\subcaptionbox{LightGCN+BC loss\label{fig:tau_Lightgcn}}{
	    \vspace{-5pt}
		\includegraphics[width=0.26\linewidth]{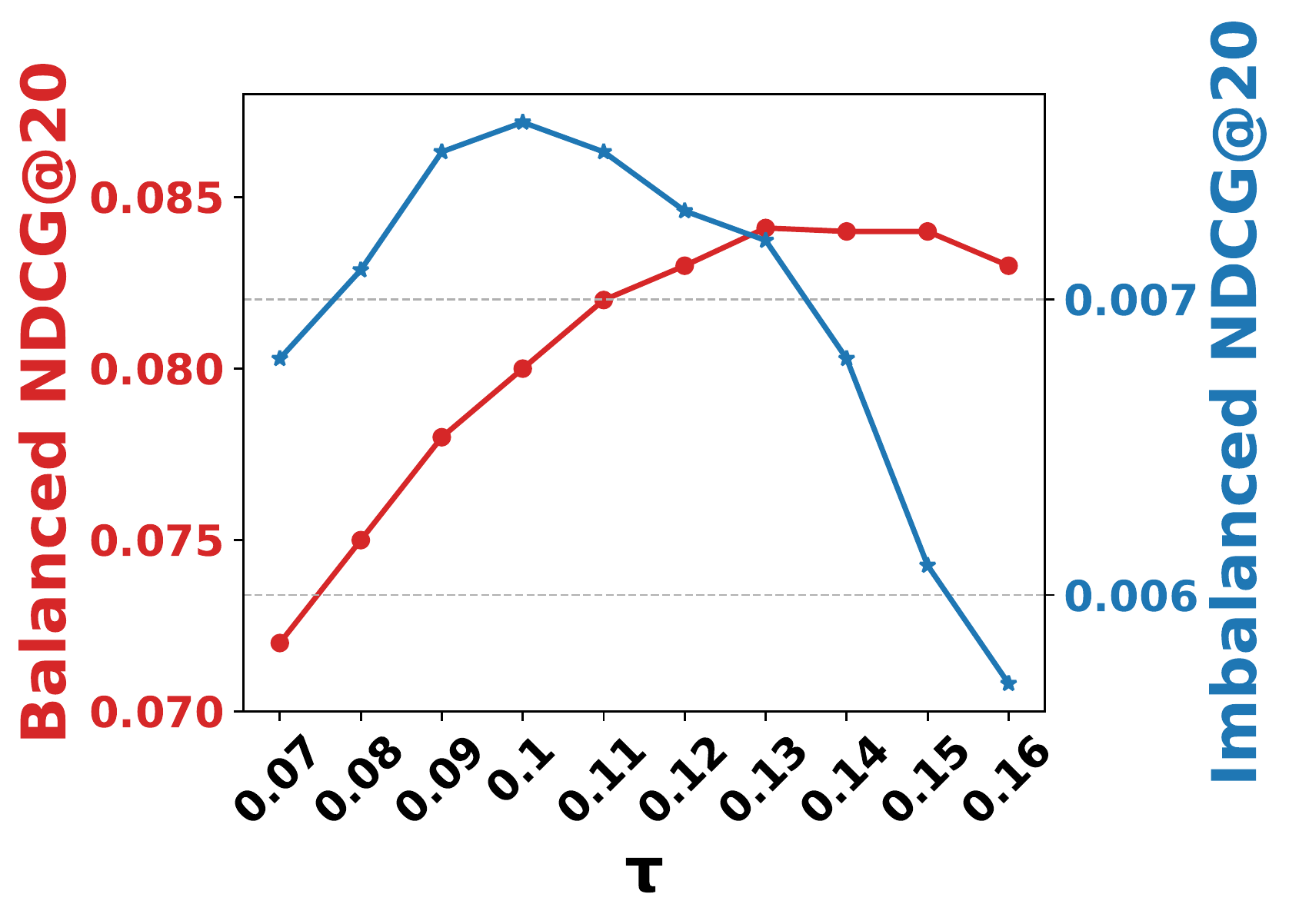}}
	\vspace{-5pt}
	\caption{(a) Comparisons with a varying margin; (b-c) Temperature $\tau$ sensitivity analysis on Tencent.}
	\label{fig:study-bc-loss}
	\vspace{-10pt}
\end{figure*}

\subsection{Study on BC Loss (RQ3)}
\label{sec:ablation}

\textbf{Effect of Bias-aware Margin.}
Figure \ref{fig:margin} displays the performance on balanced and imbalanced test sets on Tencent among softmax loss, BC loss with constant margin M \cite{ArcFace}, and BC loss with adaptive bias-aware margin.
BC loss achieves the best performance, illustrating that bias-aware margin indeed is effective at reducing popularity bias and learning high-quality representations.

\textbf{Effect of Temperature $\tau$.}
BC loss has one hyperparameter to tune --- temperature $\tau$ in Equation \eqref{equ:bc_loss}.
In Figure \ref{fig:tau_mf} and \ref{fig:tau_Lightgcn}, both balanced and imbalanced evaluations exhibit the concave unimodal functions of $\tau$, where the curves reach the peak almost synchronously in a small range of $\tau$. For example, MF+BC loss gets the best performance when $\tau=0.05$ and $\tau=0.07$ in balanced and imbalanced settings, respectively;
We observe similar trends on other datasets and skip them due to the space limit.
This justifies that BC loss does not suffer from the trade-off between the balanced and imbalanced evaluations and improves the generalization without sacrificing the head performance.



\section{Related Work}
Prevalent popularity debiasing strategies in CF roughly fall into four research lines.

\textbf{Post-processing re-ranking methods} \cite{Calibration,RankALS,rescale,FPC,PC_Reg} are applied to the output of the recommender system without changing the representations of users and items. 
The purposes of modifying the ranking of models can be various: Calibration \cite{Calibration} ensures that the past interests proportions of users are expected to maintain at the same level; RankALS \cite{RankALS} aims to increase the diversification of recommendation; 
FPC \cite{FPC} investigates the popularity bias in the dynamic recommendation by rescaling the predicted scores.

\textbf{Regularization-based frameworks} \cite{ESAM,ALS+Reg,Regularized_Optimization,PC_Reg} explore the use of regularization that provides a tunable mechanism for controlling the trade-off between recommendation accuracy and coverage.
The difference among these methods is the design of penalty terms: 
ALS+Reg \cite{ALS+Reg} defines intra-list distance as the penalty to achieve the fair recommendation; ESAM \cite{ESAM} introduces the attribute correlation alignment, center-clustering, and self-training regularization to learn good feature representations; sam-reg \cite{Regularized_Optimization} regularizes the biased correlation between user-item relevance and item popularity; Reg \cite{PC_Reg} decouples the item popularity with the model preference predictions.

\textbf{Sample re-weighting methods} \cite{ips,IPS-C,IPS-CN,Propensity_SVM-Rank,YangCXWBE18,UBPR}, also known as Inverse Propensity Score (IPS), view the item popularity in the training set as the propensity score and exploit its inverse to re-weight loss of each instance.
To address the high variance of re-weighted loss, many of them \cite{IPS-CN,IPS-C} further employ normalization or smoothing penalty to attain a more stable output.
However, the unreliability of methods is due to their measurement of the propensity score, leveraging the item frequency but failing to consider interaction-wise popularity bias.

\textbf{Bias removal by causal inference methods} \cite{CausE,KDCRec, DICE, PDA, DecRS, MACR}, getting inspiration from the recent success of counterfactual inference, specify the role of popularity bias in assumed causal graphs and mitigate the bias effect on the prediction. 
However, the causal structure is heuristically assumed based on the author's understanding, without any theoretical guarantee.

BC loss opens up a possibility of conventional debiasing methods in CF that mitigate the popularity bias by enhancing the discriminative power. 
Recent studies, boosting the discriminative feature spaces by modified softmax loss are mainly discussed in face recognition, where a constant margin is added \cite{ArcFace} to better classify.
We transfer it in CF and compare it with BC loss in Figure \ref{fig:margin}.


\section{Conclusion} \label{sec:conclusion}
Despite the great success in collaborative filtering, today's popularity debiasing methods are still far from \wx{being able to improve} the recommendation quality. 
In this work, we proposed a simple yet effective BC loss, utilizing popularity bias-aware margin to eliminate the popularity bias.
Grounded by theoretical proof, clear geometric interpretation and real-world visualization study, BC loss boosts the head and tail performance by learning a more discriminative representation space.
Extensive experiments verify that the remarkable improvement in head and tail evaluations on various test sets indeed comes from \wx{the} better representation rather than \wx{simply} catering to the tail.

The limitations of BC loss are in three respects, which will be addressed in future work: 1) the modeling of bias-aware margin is worth exploring, which could significantly influence the performance of BC loss, 2) multiple important biases, such as exposure and selection bias, are not considered, and 3) more experiments comparing BC loss to standard CF losses (\eg cross-entropy, WARP) are needed to \wx{further} demonstrate the power of BC loss in regular recommendation tasks (\za{See comparison to BPR, CCL and SSM in Appendix \ref{sec:standard_loss}}). 
We believe \wx{that} this work provides a potential research direction to diagnose the debiasing of long-tail ranking and will inspire more works.

\begin{ack}
This research is supported by the Sea-NExT Joint Lab, and CCCD Key Lab of Ministry of Culture and Tourism.
Assistance provided by Jingnan Zheng (\texttt{e0718957@u.nus.edu}) is greatly appreciated.
\end{ack}

\bibliographystyle{unsrt}


\newpage
\appendix

\section{In-depth Analysis of BC loss}
\subsection{Visualization of A Toy Experiment} \label{sec:toy_example}

Here we visualize a toy example on the Yelp2018 \cite{LightGCN} dataset to showcase the effect of BC loss.
Specifically, we train a two-layer LightGCN whose embedding size is three, and illustrate the 3-dimensional normalized representations on a 3D unit sphere in Figure \ref{fig:toy-example} (See the magnified view in Figure \ref{fig:magnified-toy-example}).
We train the identical LightGCN backbone with different loss functions: BPR loss, softmax loss, BC loss, and IPS-CN \cite{IPS-CN}.
For the same head/tail user (\ie green stars), we plot 500 items in the unit sphere covering all positive items (\ie red dots) and randomly-selected negative items (\ie blue dots) from both the training and testing sets.
Moreover, the angle distribution (the second row of each subfigure) of positive and negative items for a certain user quantitatively shows the discriminative power of each loss.
We observe that:
\begin{itemize}[leftmargin = *]
    \item \textbf{BC loss learns more discriminative representations in both head and tail user cases. Moreover, BC loss learns a more reasonable representation distribution that is locally clustered and globally separated.} 
    As Figures \ref{fig:head_bc} and \ref{fig:tail_bc} show, for head and tail users, BC loss encourages around 40\% and 55\% of positive items to fall into the group closest to user representations, respectively.
    In other words, these item representations are almost clustered to a small region.
    BC loss also achieves the smallest distance \wrt mean positive angle.
    This verifies that BC loss tends to learn a high similar item/user compactness.
    Moreover, Figure \ref{fig:tail_bc} presents a clear margin between positive and negative items, reflecting a highly-discriminative power.
    Compared to softmax loss in Figures \ref{fig:head_softmax} and \ref{fig:tail_softmax}, the compactness and dispersion properties of BC loss come from the incorporation of interaction-wise bias-aware margin.
    
    \item \textbf{The representations learned by standard CF losses - BPR loss and softmax loss - are not discriminative enough.}
    Under the supervision of BPR and softmax losses, item representations are separably allocated in a wide range of the unit sphere, where blue and red points occupy almost the same space area, as Figures \ref{fig:head_bpr} and \ref{fig:head_softmax} demonstrate.
    Furthermore, Figure \ref{fig:tail_bpr} shows only a negligible overlap between positive and negative items' angle distributions.
    However, as the negative items are much more than the positive items for the tail user, a small overlap will make many irrelevant items rank higher than the relevant items, thus significantly hindering the recommendation accuracy.
    Hence, directly optimizing BPR or softmax loss might be suboptimal for the personalized recommendation tasks. 
    
    \item \textbf{IPS-CN, a well-known popularity debiasing method in CF, is prone to lift the tail performance by sacrificing the representation learning for the head.}
    Compared with BPR loss in Figure \ref{fig:tail_bpr}, IPS-CN learns better item representations for the tail user, which achieves smaller mean positive angle as illustrated in Figure \ref{fig:tail_ips}.
    However, for the head user in Figure \ref{fig:head_ips}, the positive and negative item representations are mixed and cannot be easily distinguished. Worse still, the representations learned by IPS-CN has a larger mean positive angle for head user compared to BPR loss. 
    This results in a dramatic performance drop for head evaluations.
\end{itemize}

\subsection{Hard Example Mining Mechanism - one desirable property of BC loss:}\label{sec:hard_example}

We argue that the mechanism of adaptively mining hard interactions is inherent in BC loss, which improves the efficiency and effectiveness of training.
Distinct from softmax loss that relies on the predictive scores to mine hard negative samples \cite{sgl} and leaves the popularity bias untouched, our BC loss considers the interaction-wise biases and adaptively locates hard informative interactions.

Specifically, the popularity bias extractor $\hat{y}_b$ in Equation \eqref{eq:bias-prediction} can be viewed as a hard sample detector.
Considering an interaction $(u,i)$ with a high bias degree $cos(\hat{\xi}_{ui})$, we can only use its popularity information to predict user preference and attain a vanishing bias-aware angular margin $M_{ui}$.
Hence, interaction $(u,i)$ \wx{will} plausibly serve as the biased and easy \wx{sample, if it involves the} active users and popular items.
Its close-to-zero margin makes $(u,i)$'s BC loss approach softmax loss, thus downgrading the ranking criteria to match the basic assumption of softmax loss.

In contrast, if the popularity statistics are deficient in recovering user preference via the popularity bias extractor, the interaction $(u,i)$ garners the low bias degree $cos(\hat{\xi}_{ui})$ and exerts the significant margin $M_{ui}$ on its BC loss.
Hence, it could work as the hard \wx{sample}, which typically covers the tail users and items, and yields a more stringent assumption that user $u$ prefers the tail item over \wx{the} other popular items by a large margin.
Such a significant margin makes the losses more challenging to learn.

In a nutshell, BC loss adaptively prioritizes the interaction \wx{samples} based on their bias degree and leads the CF model to shift its attention to hard \wx{samples}, thus improving both head and tail performance, compared with softmax loss (\cf Section \ref{sec:ablation}).

\subsection{Proof of Theorem \ref{theorem}}
\label{sec:proof}

\bclosstheorem*

\begin{proof}
Let the upper-case letter $\Mat{V} \in \Space{V}$ be the random vector of representation and $\Space{V} \subseteq \Space{R}^d$ be the representation space.
We use the normalization assumption of representations to connect cosine and Euclidean distances, \ie if $\norm{\Mat{v}_u}=1$ and $\norm{\Mat{v}_i}=1$, $\Mat{v}_u^T\Mat{v}_i = 1 - \frac{1}{2} \norm{\Mat{v}_u-\Mat{v}_i}^2$, $\forall u,i$.

Let $\Set{P}_u = \{i|y_{ui} =1\}$ be the set of user $u$'s positive items , $\Set{P}_i = \{u|y_{ui} =1\}$ to be the set of item $i$'s positive users, and $\Set{N}_{u}=\{i|y_{ui}=0\}$ be the set of user $u$'s negative items.
Clearly, there exists an upper bound $m$, \st $-1 < \cos(\hat{\theta}_{ui}+M_{ui}) \leq \Mat{v}_u^T\Mat{v}_i - m <  1 $.
Therefore, we can analyze BC loss, which has the following relationships:
    \begin{align} 
        \Lapl_{BC} \geq &  -\sum_{(u,i)\in\Set{O}^{+}}\log\frac{\exp{((\Mat{v}_u^T\Mat{v}_i - m)/\tau})}{\exp{((\Mat{v}_u^T\Mat{v}_i - m)/\tau)}+\sum_{j\in\Set{N}_{u}}\exp{((\Mat{v}_u^T\Mat{v}_j)/\tau)}} \nonumber\\
        &= -\sum_{(u,i)\in\Set{O}^{+}} \frac{\Mat{v}_u^T\Mat{v}_i - m}{\tau}  + \sum_{(u,i)\in\Set{O}^{+}} \log (\exp{\frac{\Mat{v}_u^T\Mat{v}_i - m}{\tau}}+\sum_{j\in\Set{N}_{u}}\exp{\frac{\Mat{v}_u^T\Mat{v}_j}{\tau}}). \label{eq:split_bc}
    \end{align}
    We now probe into the first term in Equation \eqref{eq:split_bc}:
\begin{align}
        -\sum_{(u,i)\in\Set{O}^{+}} \frac{\Mat{v}_u^T\Mat{v}_i - m}{\tau}
        =&  \sum_{(u,i)\in\Set{O}^{+}} \frac{\norm{\Mat{v}_u-\Mat{v}_i}^2}{2\tau} +\frac{m-1}{\tau} \nonumber\\
        \eqc& \sum_{(u,i)\in\Set{O}^{+}} \norm{\Mat{v}_u-\Mat{v}_i}^2 \nonumber\\
        =& \sum_{u\in\Set{U}}\sum_{i \in \Set{P}_u} (\norm{\Mat{v}_u}^2 - \Mat{v}_u^T\Mat{v}_i )
        + \sum_{i\in\Set{I}}\sum_{u\in\Set{P}_i} (\norm{\Mat{v}_i}^2 - \Mat{v}_u^T\Mat{v}_i )\nonumber \\
        \eqc& \sum_{u\in\Set{U}}\norm{\Mat{v}_u-\Mat{c}_u}^2 + \sum_{i\in\Set{I}}\norm{\Mat{v}_i-\Mat{c}_i}^2, \label{eq:conpactness}
    \end{align}
    where the symbol $\eqc$ indicates equality up to a multiplicative and/or additive constant; $\Mat{c}_{u} = \frac{1}{|\Set{P}_u|}\sum_{i\in \Set{P}_{u}}\Mat{v}_i$ is the averaged representation of all items that $u$ has interacted with, which describes $u$'s interest; $\Mat{c}_{i} = \frac{1}{|\Set{P}_{i}|} \sum_{u\in\Set{P}_{i}} \Mat{v}_{u}$ is the averaged representation of all users who have adopted item $i$, which profiles its user group.
    We further analyze Equation \eqref{eq:conpactness} from the entropy view by conflating the first two terms:
    \begin{align}
        -\sum_{(u,i)\in\Set{O}^{+}}\frac{\Mat{v}_u^T\Mat{v}_i - m}{\tau} \eqc \sum_{\Mat{v}} \norm{\Mat{v} - \Mat{c}_{v}}^2,
    \end{align}
    where $\Mat{v}\in\{\Mat{v}_{u}|u\in\Set{U}\}\cup\{\Mat{v}_{i}|i\in\Set{I}\}$ summarizes the representations of users and items, with the mean of $\Mat{c}_{v}$.
    Following \cite{BoudiafRZGPPA20}, we further interpret this term as a conditional cross-entropy between $\Mat{V}$ and another random variable $\bar{\Mat{V}}$ whose conditional distribution given $Y$ is a standard Gaussian $\bar{\Mat{V}}|Y \sim \mathcal{N}(\Mat{c}_{\Mat{V}},I)$:
    \begin{align}
        -\sum_{(u,i)\in\Set{O}^{+}}\frac{\Mat{v}_u^T\Mat{v}_i - m}{\tau} & \eqc
        H(\Mat{V};\bar{\Mat{V}}|Y)=H(\Mat{V}|Y)+D_{KL}(\Mat{V}||\bar{\Mat{V}}|Y) \nonumber\\
        & \propto H(\Mat{V}|Y),\label{eq:first-term}
    \end{align}
    where $H(\cdot)$ denotes the cross-entropy, and $D_{KL}(\cdot)$ denotes the $KL$-divergence.
    As a consequence, the first term in Equation \eqref{eq:split_bc} is positive proportional to $H(\Mat{V}|Y)$.
    This concludes the proof for the first compactness part of BC loss.
    
    We then inspect the second term in Equation \eqref{eq:split_bc} to demonstrate its dispersion property:
    \begin{align}
     & \sum_{(u,i)\in\Set{O}^{+}} \log (\exp{\frac{\Mat{v}_u^T\Mat{v}_i - m}{\tau}}+\sum_{j\in\Set{N}_{u}}\exp{\frac{\Mat{v}_u^T\Mat{v}_j}{\tau}})\nonumber\\
     \geq& \sum_{(u,i)\in\Set{O}^{+}} \log{(
     \sum_{j\in\Set{N}_{u}}\exp{\frac{\Mat{v}_u^T\Mat{v}_j}{\tau}})} \nonumber\\
     \geq& \sum_{u\in\Set{U}}\sum_{j\in\Set{N}_{u}}\frac{\Mat{v}_u^T\Mat{v}_j}{\tau}\nonumber\\
     \eqc& - \sum_{u\in\Set{U}}\sum_{j\in\Set{N}_{u}} \norm{\Mat{v}_u - \Mat{v}_j}^2, \label{eq:dispersion}
    \end{align}
    where we drop the redundant terms aligned with the compactness objective in the second line, and adopt Jensen's inequality in the third line.
    As shown in prior studies \cite{WangS11}, minimizing this term is equivalent to maximizing entropy $H(\Mat{V})$:
    \begin{align} \label{eq:second-term}
        \sum_{(u,i)\in\Set{O}^{+}} \log (\exp{\frac{\Mat{v}_u^T\Mat{v}_i - m}{\tau}}+\sum_{j\in\Set{N}_{u}}\exp{\frac{\Mat{v}_u^T\Mat{v}_j}{\tau}}) \propto -\mathcal{H}(\Mat{V}).
    \end{align}
    As a result, the second term in Equation \eqref{eq:split_bc} works as the dispersion part in BC loss.
\end{proof}

\begin{landscape}
    \begin{figure}[t]
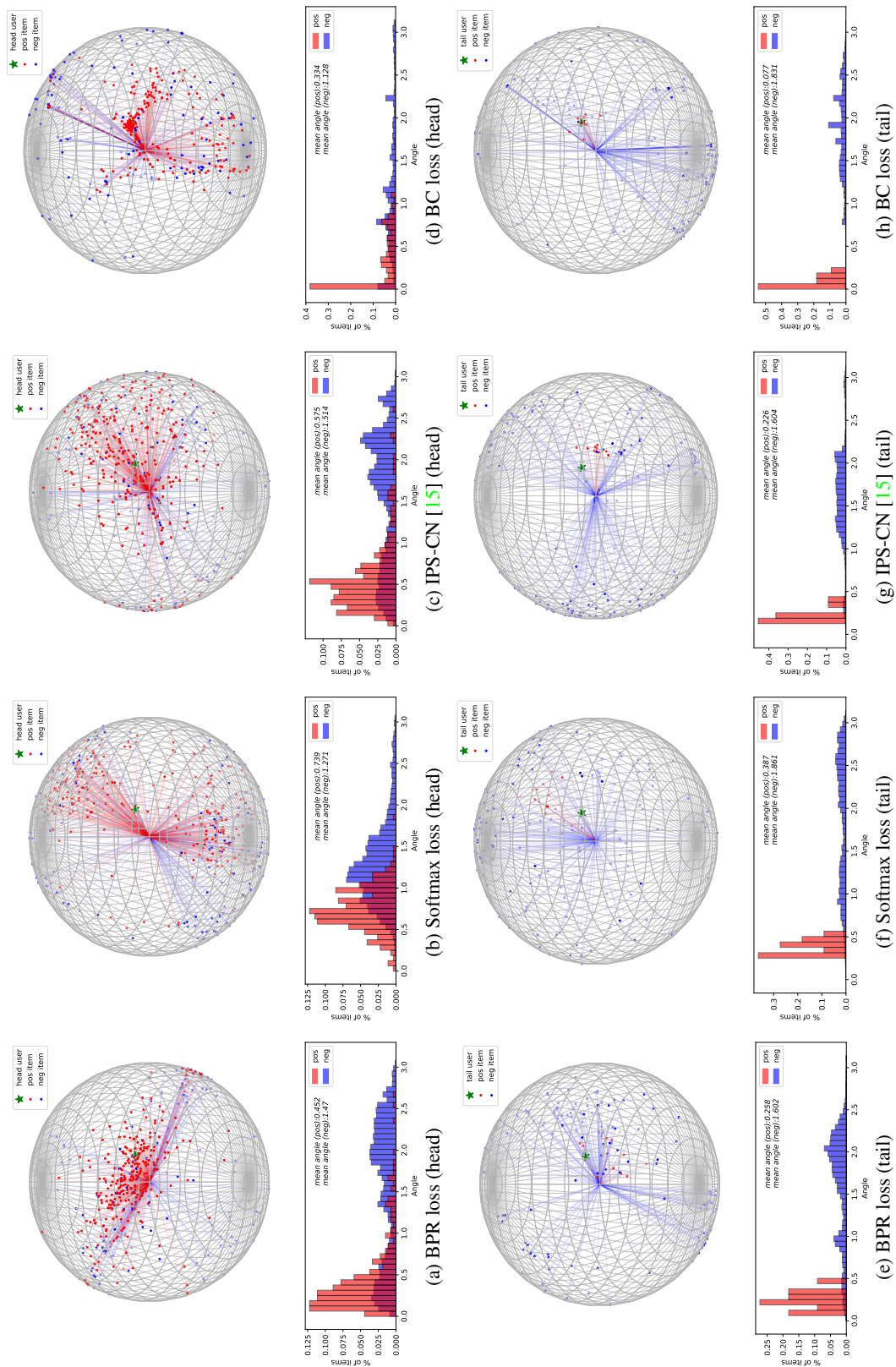

        \centering
        \subcaptionbox{BPR loss (head)\label{fig:head_bpr}}{
            \vspace{-6pt}
            \includegraphics[width=0.23\linewidth]{figures/head_bpr.pdf}}
        \subcaptionbox{Softmax loss (head)\label{fig:head_softmax}}{
            \vspace{-6pt}
            \includegraphics[width=0.23\linewidth]{figures/head_info.pdf}}
        \subcaptionbox{IPS-CN \cite{IPS-CN} (head) \label{fig:head_ips}}{
            \vspace{-6pt}
            \includegraphics[width=0.23\linewidth]{figures/head_ips.pdf}}
        \subcaptionbox{BC loss (head)\label{fig:head_bc}}{
            \vspace{-6pt}
            \includegraphics[width=0.23\linewidth]{figures/head_bc.pdf}}
    
        \subcaptionbox{BPR loss (tail)\label{fig:tail_bpr}}{
            \vspace{-6pt}
            \includegraphics[width=0.23\linewidth]{figures/tail_bpr.pdf}}
        \subcaptionbox{Softmax loss (tail)\label{fig:tail_softmax}}{
            \vspace{-6pt}
            \includegraphics[width=0.23\linewidth]{figures/tail_info.pdf}}
        \subcaptionbox{IPS-CN \cite{IPS-CN} (tail)\label{fig:tail_ips}}{
            \vspace{-6pt}
            \includegraphics[width=0.23\linewidth]{figures/tail_ips.pdf}}
        \subcaptionbox{BC loss (tail)\label{fig:tail_bc}}{
            \vspace{-6pt}
            \includegraphics[width=0.23\linewidth]{figures/tail_bc.pdf}}
        \caption{Magnified View of Figure \ref{fig:toy-example}.}
        \label{fig:magnified-toy-example}
        \vspace{-20pt}
    \end{figure}
    \end{landscape}

\section{Experiments} \label{sec:app_exp}

\begin{table}[b]
    \centering
    \vspace{-10pt}
    \caption{Dataset statistics.}
    \label{tab:dataset-statistics}
    \resizebox{\columnwidth}{!}{
    \begin{tabular}{lrrrrrrr}
    \toprule
     & KuaiRec & Douban Movie & Tencent & Amazon-Book & Alibaba-iFashion & Yahoo!R3 & Coat\\ \midrule
    \#Users & 7175 & 36,644 & 95,709 & 52,643 & 300,000 & 14382 & 290 \\
    \#Items & 10611 & 22,226 & 41,602 & 91,599 & 81,614 & 1000 & 295 \\
    \#Interactions & 1062969 & 5,397,926 & 2,937,228 & 2,984,108 & 1,607,813 & 129,748 & 2,776 \\
    Sparsity & 0.01396 & 0.00663 & 0.00074 & 0.00062 & 0.00007 
    & 0.00902 &  0.03245\\ \midrule
    $D_{KL}$-Train & 1.075 & 1.471 & 1.425 & 0.572 & 1.678 & 0.854 & 0.356 \\
    $D_{KL}$-Validation & 1.006 & 1.642 & 1.423 & 0.572 & 1.705 & 0.822 & 0.350\\
    $D_{KL}$-Balanced & - & - & 0.003 & 0.000 & 0.323 & - & -  \\
    $D_{KL}$-Imbalanced & - & - & 1.424 & 0.571 & 1.703 & - & - \\ 
    $D_{KL}$-Temporal & - & 1.428 & - & - & -  & - & - \\
    $D_{KL}$-Unbiased & 1.666 & - & - & - & -& 0.100 & 0.109 \\ \bottomrule
    \end{tabular}}
\end{table}

\subsection{Experimental Settings} \label{sec:implementation}
\textbf{Datasets.} 
We conduct experiments on eight real-world benchmark datasets: Tencent \cite{Tencent}, Amazon-Book \cite{Amazon-Book}, Alibaba-iFashion \cite{Alibaba-ifashion}, Yelp2018 \cite{LightGCN}, Douban Movie \cite{Douban}, \za{Yahoo!R3 \cite{Yahoo}, Coat \cite{ips}} and KuaiRec \cite{KuaiRec}.
All datasets are public and vary in terms of size, domain, and sparsity.
Table \ref{tab:dataset-statistics} summarizes the dataset statistics, where the long-tail degree is monitored by KL-divergence between the item popularity distribution and the uniform distribution, \ie $D_{KL}(\hat{P}_{data}|| \text{Uniform})$.
A larger KL-divergence value indicates that the heavier portion of interactions concentrates on the head of distribution. 
In the stage of pre-processing data, we follow the standard 10-core setting \cite{PDA,KGAT} to filter out the items and users with less than ten interactions.

\noindent\textbf{Data Splits.}
For comprehensive comparisons, almost all standard test distributions in CF are covered in the experiments: balanced test set \cite{MACR,DICE,KDCRec}, randomly selected imbalanced test set \cite{ESAM,ZhuWC20}, temporal split test set \cite{PDA,DecRS,Regularized_Optimization}, and unbiased test set \cite{ips,KuaiRec,Yahoo}. 
Three datasets (\ie Tencent, Amazon-Book, Alibaba-iFashion) are partitioned into both balanced and randomly selected imbalanced evaluations. 
As an intervention test, the balanced evaluation (\ie uniform distribution) is frequently employed in recent debiasing CF approaches \cite{MACR,DICE,KDCRec,Causal_inference}.
Douban is split based on the temporal splitting strategy \cite{MengMMO20}. 
KuaiRec is an unbiased, fully-observed dataset in which the feedback of the test set's interaction is explicitly collected.

\vspace{5pt}
\noindent\textbf{Evaluation Metrics.}
We adopt the all-ranking strategy \cite{KricheneR20}, \ie for each user, all items are ranked by the recommender model, except the positive ones in the training set. To evaluate the quality of recommendation, three widely-used metrics are used: Hit Ratio (HR@$K$), Recall@$K$, Normalized Discounted Cumulative Gain (NDCG@$K$), where $K$ is set as $20$ by default.

\vspace{5pt}
\noindent\textbf{Baselines.}
We validate our BC loss on two widely-used CF models, MF \cite{BPR} and LightGCN \cite{LightGCN}, which are representatives of the conventional and state-of-the-art CF models.
We compare BC loss with the popular debiasing strategies in various research lines: sample re-weighting (IPS-CN \cite{IPS-CN}), bias removal by causal inference (MACR \cite{MACR}, CausE \cite{CausE}), and regularization-based framework (sam+reg \cite{Regularized_Optimization}).
\za{We also compare BC loss to other standard losses used in collaborative filtering including most commonly used loss (BPR loss \cite{BPR}) and newest proposed softmax losses (CCL \cite{CCL} and SSM \cite{SSM})}.

\vspace{5pt}
\noindent\textbf{Parameter Settings.}
We conduct experiments using a Nivida-V100
GPU (32 GB memory) on a server with a
40-core Intel CPU (Intel(R) Xeon(R) CPU E5-2698 v4).
We implement our BC loss in PyTorch.
\wx{Our codes, datasets, and hyperparameter settings are available at \url{https://github.com/anzhang314/BC-Loss} to guarantee reproducibility.}
For a fair comparison, all methods are optimized by Adam \cite{Adam} optimizer with the batch size as 2048, embedding size as 64, learning rate as 1e-3, and the coefficient of regularization as 1e-5 in all experiments.
Following the default setting in \cite{LightGCN}, the number of embedding layers for LightGCN is set to 2.
We adopt the early stop strategy that stops training if Recall@20 on the validation set does not increase for 10 successive epochs.
A grid search is conducted to tune the critical hyperparameters of each strategy to choose the best models \wrt Recall@20 on the validation set.
\za{For softmax, SSM, and BC loss, we search $\tau$ in [0.06, 0.14] with a step size of 0.02.}
For CausE, 10\% of training data with balanced distribution are used as intervened set, and $cf\_pen$ is tuned in $[0.01,0.1]$ with a step size of 0.02.
For MACR, we follow the original settings to set weights for user branch $\alpha=1e-3$ and item branch $\beta=1e-3$, respectively. We further tune hyperparameter $c=[0, 50]$ with a step size of 5.
\za{For CCL loss, we search $w$ in $\{1,2,5,10,50,100,200\}$, $m$ in the range $[0.2,1]$ with a step size of 0.2. When it come to the number of negative samples, the softmax, SSM, CCL, and BC loss set 128 for MF backbone and in-batch negative sampling for LightGCN models.}

\subsection{Training Cost} 
\begin{table}[t]
    \centering
    \caption{Training cost on Tencent (seconds per epoch/in total). }
    \label{tab:elapse_time}
    \resizebox{0.9\linewidth}{!}{
    \begin{tabular}{l|rrrrrr}
    \toprule
     & Backbone & +IPS-CN & +CausE & +sam+reg & +MACR & +BC loss\\ \midrule
    MF & 15.5 / 17887 & 17.8 / 10662 & 16.6 / 1859 & 18.2 / 3458 & 160 / 17600 &  36.1 / 12815\\
    LightGCN & 78.6 / 4147 & 108 / 23652 & 47.2 / 3376 & 49.8 / 10458 & 135 / 20250 & 283 / 7075\\ \bottomrule
    \end{tabular}}
\end{table}

In terms of time complexity, as shown in Table \ref{tab:elapse_time}, we report the time cost per epoch and in total of each baselines on Tencent.
Compared with backbone methods (\ie MF and LightGCN), BC loss adds very little computing complexity to the training process.

\subsection{Evaluations on Unbiased Test Set} \label{sec:KuaiRec}

\noindent\textbf{Motivation.}
Because of the missing-not-at-random condition in a real recommender system, offline evaluation on collaborative filtering and recommender system is commonly acknowledged as a challenge.
To close the gap, Yahoo!R3 \cite{Yahoo} and Coat \cite{ips} are widely used, which offer unbiased test sets that are collected using the missing-complete-at-random (MCAR) concept.
Additionaly, newly proposed KuaiRec \cite{KuaiRec} also provides a fully-observed unbiased test set with 1,411 users over 3,327 videos.
We conduct experiments on all these datasets for comprehensive comparison, and KuaiRec is also included as one of our unbiased evaluations for two key reasons:
1) It is significantly larger than existing MCAR datasets (\eg Yahoo! and Coat); 
2) It overcomes the missing values problem, making it as effective as an online A/B test.

\begin{table}[t]
  \centering
  \caption{Performance comparison on KuaiRec dataset. }
  \label{tab:KuaiRec}
  \resizebox{0.7\columnwidth}{!}{
  \begin{tabular}{l|ccc|ccc}
  \toprule
   & \multicolumn{3}{c|}{Validation} & \multicolumn{3}{c}{Unbiased Test} \\
  \multicolumn{1}{c|}{} & HR & Recall & NDCG & HR & Recall & NDCG \\\midrule
  LightGCN & 0.299 & 0.069 & 0.051 & 0.104 & 0.0038  & 0.0064  \\
  + IPS-CN & 0.255 & 0.056 & 0.042 & \underline{0.109} & \underline{0.0073} & \underline{0.0083}  \\
  + CausE & 0.292 & 0.067 & 0.050 & 0.101 & 0.0056 & 0.0077 \\
  + sam+reg & 0.274 & 0.060 &  0.047 & 0.107 & 0.0069 & 0.0080 \\
  + BC loss & 0.343 & 0.076 & 0.062 & \textbf{0.139}* & \textbf{0.0077}* & \textbf{0.0115}* \\\midrule
  Imp.\% & - & - & - & 27.5\% & 4.05\% & 38.6\% \\\bottomrule
  \end{tabular}}
  \vspace{-15pt}
\end{table}

\noindent\textbf{Parameter Settings.} For BC loss on Yahoo!R3 and Coat, we search $\tau_1$ in [0.05, 0.21] with a step size of 0.01, and $\tau_2$ in [0.1, 0.6] with a step size of 0.1, and search the number of negative samples in [16, 32, 64, 128]. We adopt the batch size of 1024 and learning rate of 5e-4.

\noindent\textbf{Results.}
Table \ref{tab:KuaiRec} and \ref{tab:Yahoo&Coat} illustrate the unbiased evaluations on KuaiRec, Yahoo!R3, and Coat dataset using LightGCN and MF as backbone models, respectively.
The best performing methods are bold and starred, while the strongest baselines are underlined;
Imp.\% measures the relative improvements of BC loss over the strongest baselines.
BC loss is consistently superior to all baselines \wrt all metrics. 
It indicates that BC loss truly improves the generalization ability of recommender.

\subsection{Performance Comparison with Standard Loss Functions in CF} \label{sec:standard_loss}

\noindent\textbf{Motivation.}
To verify the effectiveness of BC loss as a standard learning strategy in collaborative filtering, we further conduct the experiments over various datasets between BC loss, BPR loss, CCL loss \cite{CCL}, and SSM loss \cite{SSM}. We choose these three losses as baselines for two main reasons: 1) BPR loss is the most commonly applied in recommender system; 2) SSM and CCL are most recent proposed losses, where SSM is also a softmax loss and CCL employs a global margin. 

\noindent\textbf{Results.}
Table \ref{tab:loss_compare} reports the performance on both balanced and imbalanced test sets on various datasets among different losses. 
We have two main observations: (1) Clearly, our BC loss consistently outperforms CCL and SSM; (2) CCL and SSM achieve comparable performance to BC loss in the imbalanced evaluation settings, while performing much worse than BC loss in the balanced evaluation settings. This indicates the superiority of BC loss in alleviating the popularity bias, and further justifies the effectiveness of the bias-aware margins.

Table \ref{tab:Yahoo&Coat} shows the unbiased evaluations on Yahoo!R3 and Coat dataset. 
With regard to all criteria, BC loss constantly outperforms all other losses. 
It verifies the effectiveness of instance-wise bias margin of BC loss.

\begin{table*}[t]
\caption{Performance comparison on Tecent, Amazon-book and Alibaba-iFashion datasets.}
\label{tab:loss_compare}
\resizebox{\textwidth}{!}{
\begin{tabular}{l|cc|cc|cc|cc|cc|cc}
\toprule
\multicolumn{1}{c|}{} & \multicolumn{4}{c|}{Tencent}       & \multicolumn{4}{c|}{Amazon-book}   & \multicolumn{4}{c}{Alibaba-iFashion} \\
\multicolumn{1}{c|}{} &
    \multicolumn{2}{c}{Balanced} &
    \multicolumn{2}{c|}{Imbalanced} &
    \multicolumn{2}{c}{Balanced} &
    \multicolumn{2}{c|}{Imbalanced} &
    \multicolumn{2}{c}{Balanced} &
    \multicolumn{2}{c}{Imbalanced} \\
\multicolumn{1}{c|}{} & Recall & NDCG   & Recall & NDCG   & Recall & NDCG   & Recall & NDCG   & Recall  & NDCG    & Recall  & NDCG   \\
\midrule
BPR    & 0.0052 & 0.0040 & 0.0982 & 0.0643 & 0.0109 & 0.0103 & 0.0850 & 0.0638 & 0.0056  & 0.0028  & 0.0843 & 0.0411 \\
SSM            & 0.0055 & 0.0045 & \underline{0.1297} & 
\underline{0.0872} & 0.0156 & 0.0157 & 0.1125 & \underline{0.0873} & 
\underline{0.0079}  & \underline{0.0040}  & \underline{0.0963}  & \underline{0.0436} \\
CCL           & \underline{0.0057} & \underline{0.0047} & 0.1216 & 0.0818 & \underline{0.0175} & \underline{0.0167} & \underline{0.1162} & \underline{0.0927} & 0.0075  & 0.0038  & 0.0954  & 0.0428 \\
BC Loss            & \textbf{0.0087}* & \textbf{0.0068}* & \textbf{0.1298}* & \textbf{0.0904}* & \textbf{0.0221}* & \textbf{0.0202}* & \textbf{0.1198}* & \textbf{0.0948}* & \textbf{0.0095}*  & \textbf{0.0048}*  & \textbf{0.0967}*  & \textbf{0.0487}* \\\midrule
Imp. \%    &  52.6\%  &  44.7\%      &    0.1\%  &  3.7\%      &   26.3\%     &    21.0\%    &    3.1\%    &  2.3\%      &    20.3\%     &    20.0\%     &    0.4\%     & 11.7\%        \\
\bottomrule
\end{tabular}}
\end{table*}

\begin{table}[t]
    \centering
    \caption{Performance comparison on Yahoo!R3 and Coat dataset. }
    \label{tab:Yahoo&Coat}
    \resizebox{0.4\columnwidth}{!}{
    \begin{tabular}{l|cc|cc}
    \toprule
     & \multicolumn{2}{c|}{Yahoo!R3} & \multicolumn{2}{c}{Coat} \\
    \multicolumn{1}{c|}{} & Recall & NDCG & Recall & NDCG \\\midrule
    
    IPS-CN &  0.1081 & 0.0487 & 0.1700 & 0.1377 \\
    CausE  & 0.1252 & 0.0537 & \underline{0.2329} & 0.1635  \\
    sam+reg & 0.1198 & 0.0548 & 0.2303 & 0.1869 \\
    MACR & 0.1243 & 0.0539 & 0.0798 & 0.0358 \\\midrule
    BPR & 0.1063 & 0.0476 & 0.0741 & 0.0361\\
    SSM & \underline{0.1470} & 
    \underline{0.0688}  & 0.2022 & 0.1832\\
    CCL & 0.1428 & 0.0676 & 0.2150 & \underline{0.1885} \\
    BC loss  & \textbf{0.1487}* & \textbf{0.0706}* & \textbf{0.2385}* & \textbf{0.1969}* \\\midrule
    Imp.\% & 1.2\% & 2.6\% & 2.4\% & 4.5\% \\\bottomrule
    \end{tabular}}
  \end{table}
  
  \subsection{Study on BC Loss (RQ3) - Effect of Popularity Bias Extractor} \label{sec:bias_extractor}
  
  \noindent\textbf{Motivation.} To check the effectiveness of popularity bias extractor, we need to answer two main questions: 1) what kinds of interactions will be learned well by the bias extractor? What does the learned bias-angle distribution look like? 2) can BC loss benefit from the bias margin extracted according to the popularity bias extractor in various groups of interactions? 
  We devise the following experiment to tackle the aforementioned problems.
  
  \noindent\textbf{Experiment Setting.}
  Users can be divided into three parts: head, mid, and tail, based on their popularity scores. Analogously, items can be partitioned into the head, mid, and tail parts. As such, we can categorize all user-item interactions into nine subgroups. 
  We have visualized the learned angles for various types of interactions in the following table \ref{fig:angles}. 
  
  \noindent\textbf{Results.} The table \ref{fig:angles} shows the learned angles over all subgroups. We find that interactions between head users and head items tend to hold small angles. Moreover, as evidenced by the high standard deviation, the interactions stemming from the same subgroup types are prone to receive a wide range of angular values. This demonstrates the variability and validity of instance-wise angular margins.
  
  \begin{figure*}
      \centering
      \includegraphics[width=0.5\linewidth]{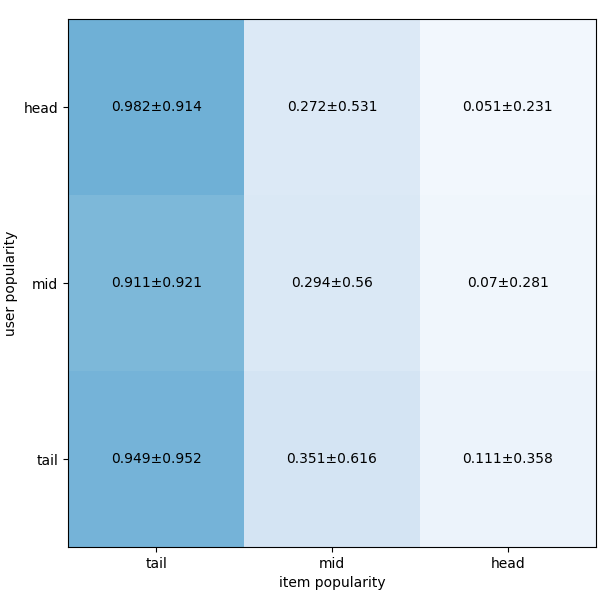}
      \caption{Visualization of popularity bias angle for different types of interactions on Tencent.}
      \label{fig:angles}
  \end{figure*}

\subsection{Visualization of interaction bias degree}
Figures \ref{fig:item_pop} and \ref{fig:user_pop} illustrate the relations between popularity scores and bias degree extracted by popularity bias extractor \wrt user and item sides, respectively. Specifically, positive trends are shown, where their relations are also quantitatively supported by Pearson correlation coefficients. (0.7703 and 0.662 for item and user sides, respectively). It verifies the power of popularity embeddings to predict the popularity scores --- that is, user popularity embeddings derived from the popularity bias extractor are strongly correlated and sufficiently predictive to user popularity scores; analogously to the item side.
  \begin{figure*}[t]
      \centering
      \subcaptionbox{Bias degree \wrt Item popularity bias \label{fig:item_pop}}{
          \includegraphics[width=0.43\linewidth]{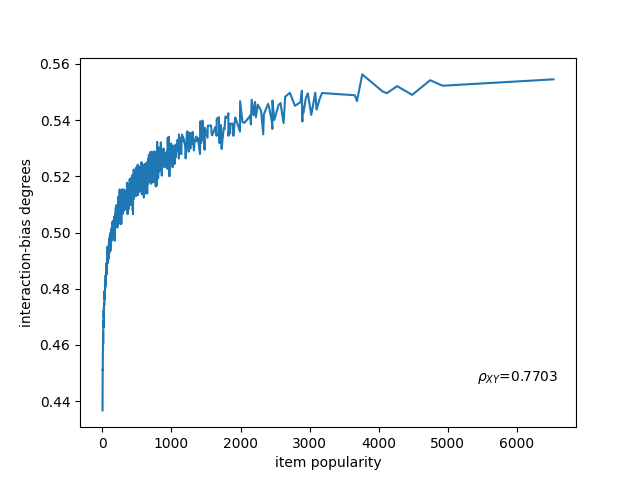}}
      \subcaptionbox{Bias degree \wrt User popularity bias\label{fig:user_pop}}{
          \includegraphics[width=0.43\linewidth]{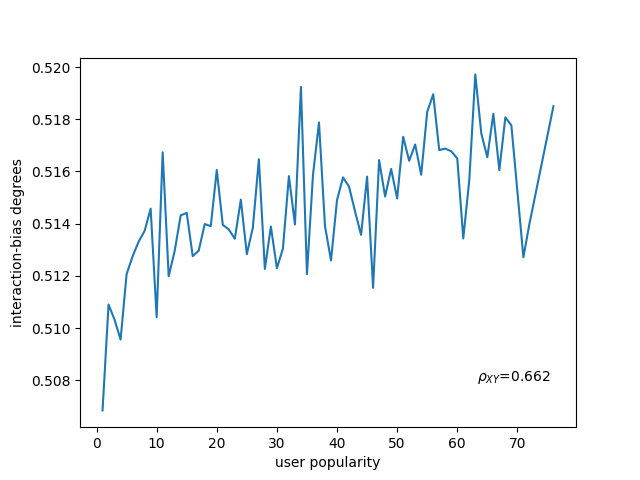}}
      \caption{Visualizations of relationships between interaction bias degree estimated by our popularity bias extractor and item/user popularity statistics. Pearson correlation coefficients are provided.}
  \end{figure*}

  \begin{table}
    \centering
    \caption{Model architectures and hyper-parameters}
    \label{tab:parameter}
    \resizebox{0.6\columnwidth}{!}{
    \begin{tabular}{l|c|c|c|c|c|c}
    \toprule
     & \multicolumn{5}{c}{BC loss hyper-parameters} \\\midrule
     
    \multicolumn{1}{c|}{} & $\tau_1$ & $\tau_2$ & lr & batch size & No. negative samples \\\midrule
    \textbf{MF} \\\midrule
    Tencent &  0.06 &  0.1 & 1e-3 & 2048 & 128 \\\midrule
    iFashion & 0.08 & 0.1 & 1e-3 & 2048 & 128  \\\midrule
    Amazon & 0.08 & 0.1 & 1e-3 & 2048 & 128 \\\midrule
    Douban  & 0.08 & 0.1 & 1e-3 & 2048 & 128 \\\midrule
    Yahoo!R3 & 0.15 & 0.2 & 5e-4 & 1024 & 128 \\\midrule
    Coat & 0.09 & 0.4  & 5e-4 & 1024 & 64\\\midrule
    \textbf{LightGCN} \\\midrule
      Tencent &  0.12 &  0.1 & 1e-3 & 2048 & in-batch\\\midrule
      iFashion & 0.14 & 0.1 & 1e-3 & 2048 & in-batch\\\midrule
    Amazon & 0.08 &  0.1 & 1e-3 & 2048 & in-batch\\\midrule
    Douban  & 0.14 &  0.1 & 1e-3 & 2048 & in-batch\\\midrule
    \end{tabular}}
  \end{table}

\end{document}